\newtheorem{theorem}{Theorem}[section]
\newtheorem{lemma}[theorem]{Lemma}
\newtheorem{define}[theorem]{Definition}
\newtheorem{remark}[theorem]{Remark}
\newtheorem{proc}[theorem]{Procedure}
\def\a{{\mathbf{a}}}
\def\R{{\mathbb{R}}}
\def\Q{{\mathbb{Q}}}
\def\N{{\mathbb{N}}}
\def\CH{{\mathbb{C}}}
\def\P{{\mathcal{P}}}
\def\C{{\mathcal{C}}}
\def\V{{\mathcal{V}}}
\def\M{{\mathcal{M}}}
\def\RC{{\mathcal{R}}}
\def\span{{\hbox{\rm{Span}}}}
\def\ord{{\rm{ord}}}
\def\d{{\hbox{\rm{d}}}}
\lstdefinelanguage{Maple}{
   keywords={if, while, do, else, end, for, from, to,then},
   keywordstyle=\color{blue}\bfseries,
   ndkeywords={class, export, boolean, throw, implements, import, this},
   ndkeywordstyle=\color{darkgray}\bfseries,
   identifierstyle=\color{black},
   sensitive=false,
   comment=[l]{//},
   morecomment=[s]{/*}{*/},
   commentstyle=\color{purple}\ttfamily,
   stringstyle=\color{red}\ttfamily,
   morestring=[b]',
   morestring=[b]"
}
\lstdefinelanguage{SOStools}{
   keywords={syms,sosprogram,monomials,sosineq,sossetobj,sossolve,sosgetsol,sospolyvar},
   keywordstyle=\color{blue}\bfseries,
   ndkeywords={syms,sosprogram,monomials,sosineq,sossetobj,sossolve,sosgetsol},
   ndkeywordstyle=\color{blue}\bfseries,
   identifierstyle=\color{black},
   sensitive=false,
   comment=[l]{//},
   morecomment=[s]{/*}{*/},
   commentstyle=\color{purple}\ttfamily,
   stringstyle=\color{red}\ttfamily,
   morestring=[b]',
   morestring=[b]"
}
\begin{document}


\title{\bf Lower Bound on Derivatives of\\ Costa's Differential Entropy}
%
\author{Laigang Guo, Chun-Ming Yuan, Xiao-Shan Gao\\
KLMM, Academy of Mathematics and Systems Science\\
 Chinese Academy of Sciences, Beijing 100190, China\\
University of Chinese Academy of Sciences, Beijing 100049, China}
\date{}
\maketitle

\begin{abstract}
\noindent
Several conjectures concern the lower bound for the differential entropy $H(X_t)$
of an $n$-dimensional random vector $X_t$ introduced by Costa.
Cheng and Geng conjectured that $H(X_t)$ is completely monotone,
that is, $C_1(m,n): (-1)^{m+1}(\d^m/\d^m t)H(X_t)\ge0$.
McKean conjectured that Gaussian $X_{Gt}$ achieves the minimum of
$(-1)^{m+1}(\d^m/\d^m t)H(X_t)$ under certain conditions, that is,
$C_2(m,n): (-1)^{m+1}(\d^m/\d^m t)H(X_t)\ge(-1)^{m+1}(\d^m/\d^m t)H(X_{Gt})$.
McKean's conjecture was only considered in the univariate case before:
$C_2(1,1)$ and  $C_2(2,1)$ were proved by McKean
and $C_2(i,1),i=3,4,5$ were proved by Zhang-Anantharam-Geng under the log-concave condition.
In this paper, we prove $C_2(1,n)$, $C_2(2,n)$ and observe that McKean's conjecture
might not be true for $n>1$ and $m>2$. We further propose a weaker version
$C_3(m,n): (-1)^{m+1}(\d^m/\d^m t)H(X_t)\ge(-1)^{m+1}\frac{1}{n}(\d^m/\d^m t)H(X_{Gt})$
and prove $C_3(3,2)$, $C_3(3,3)$, $C_3(3,4)$, $C_3(4,2)$ under the log-concave condition.
A systematical procedure to prove $C_l(m,n)$ is proposed based on semidefinite programming
and the results mentioned above are proved using this procedure.

\vskip10pt\noindent{\bf Keyword.}
Costa's differential entropy, Mckean's conjecture,
log-concavity, Gaussian optimality, lower bound of differential entropy.
\end{abstract}



\section{Introduction}

Shannon's {\em entropy power inequality (EPI)}
is one of the most important information inequalities~\cite{Shannon1948}, which has many proofs, generalizations, and applications~\cite{Stam1959,Blachman1965,Lieb1978,VerduGuo2006,Rioul2011,Bergmans1974,Zamir1993,Liu2007,Wang2013}.
In particular, Costa presented a stronger version of the EPI in his seminal paper~\cite{Costa1985}.

Let $X$ be an n-dimensional random vector with {\em probability density} $p(x)$.
For $t>0$, define $X_t\triangleq X+Z_t$, where $Z_t\thicksim N_n(0,tI)$ is an independent standard Gaussian random vector with covariance matrix $t\times I$.
The {\em probability density} of $X_t$ is
\begin{equation}
\label{1.7}
p_t(x_t)=\dfrac{1}{(2\pi t)^{n/2}}\int_{\R^n}p(x)\exp\left(-\dfrac{\|x_t-x\|^2}{2t}\right) \d x_t.
\end{equation}
%
{\em Costa's differential entropy} is defined to be the
differential entropy of $X_t$:
\begin{equation}
\label{1.1}
H(X_t)=-\int_{\R^n} p_t(x_t)\log p_t(x_t)\d x_t.
\end{equation}
%
Costa~\cite{Costa1985} proved that
the {\em entropy power} of $X_t$, given by
$N(X_t)=\dfrac{1}{2\pi e}e^{(2/n)H(X_t)}$
is a concave function in $t$.
More precisely,
Costa proved $(\d/\d t)N(X_t)\ge0$ and $(\d^2/\d^2 t)N(X_t)\le0$.
%
%

Due to its importance, several new proofs and generalizations for Costa's EPI were given.
Dembo~\cite{Dembo1989} gave a simple proof for Costa's EPI via the Fisher information inequality.
Villani~\cite{Villani2000} proved Costa's EPI with advanced theories.
Toscani \cite{Toscani2015} proved that  $({\d^3}/{\d^3 t})N(X_t)\ge0$
if $p_t$ is log-concave.
Cheng and Geng   proposed a conjecture~\cite{Cheng2015}:

\textbf{Conjecture 1.}
$H(X_t)$ is {\em completely monotone} in $t$, that is,
\begin{equation}
\label{eq-D}
C_1(m,n): (-1)^{m+1}(\d^m/\d^m t)H(X_t)\ge0.
\end{equation}
Costa's EPI implies $C_1(1,n)$ and $C_1(2,n)$~\cite{Costa1985},
Cheng-Geng proved $C_1(3,1)$ and $C_1(4,1)$~\cite{Cheng2015}.
In \cite{GYG2020}, the multivariate case of Conjecture 1 was considered and $C_1(3,2)$, $C_1(3,3)$,  $C_1(3,4)$ were proved.

Let $X_{G}$ be an $n$-dimensional Gaussian random vector and  $X_{Gt}\triangleq X_{G}+Z_t$
the Gaussian $X_t$.
McKean~\cite{McKean1966} proved that $X_{Gt}$ achieves the minimum of $(\d/\d t)H(X_t)$ and $-(\d^2/\d^2 t)$ $H(X_t)$ subject to Var$(X_t)=\sigma^2+t$,
and conjectured the general case, that is

\textbf{Conjecture 2.} The following inequality holds {subject to Var$(X_t)=\sigma^2+t$},
\begin{equation}\label{eq-C2}
\begin{array}{ll}
C_2(m,n): (-1)^{m+1}(\d^m/\d^m t)H(X_t)\ge(-1)^{m+1}(\d^m/\d^m t)H(X_{Gt})¡£
\end{array}\end{equation}

McKean proved $C_2(1,1)$ and $C_2(2,1)$~\cite{McKean1966}. Zhang-Anantharam-Geng~\cite{Zhang2018} proved $C_2(3,1)$, $C_2(4,1)$ and $C_2(5,1)$ if the probability density function of $X_t$ is log-concave.
The  work~\cite{Zhang2018,McKean1966} were limited to the univariate case.
In this paper, we consider the multivariate case of Conjecture 2
and will prove $C_2(1,n)$ and $C_2(2,n)$, which
give the exact lower bounds for
$(-1)^{m+1}(\d^m/\d^m t)H(X_t)$ for $m=1,2$.
We also notice that in the multivariate case, Conjecture 2 might not be true for $m>2$
even under the log-concave condition, which motivates us to propose the following weaker conjecture.

\textbf{Conjecture 3.} The following inequality holds {subject to Var$(X_t)=\sigma^2+t$},
\begin{equation}\label{eq-C3}
\begin{array}{ll}
C_3(m,n): (-1)^{m+1}(\d^m/\d^m t)H(X_t)\ge(-1)^{m+1}\frac{1}{n}(\d^m/\d^m t)H(X_{Gt}).
\end{array}\end{equation}

The three conjectures give different lower bounds for
the derivatives of $(-1)^{m+1}H(X_t)$.
Also, Conjecture 2 implies Conjecture 3 and
Conjecture 3 implies Conjecture 1, since $H(X_{Gt})\ge0$~\cite{Zhang2018}.
%

In this paper, we propose a systematical and effective procedure to prove $C_l(m,n)$,
which  consists of three main ingredients.
First, a systematic method is proposed to compute constraints
$R_i,i=1,\ldots,N_1$ satisfied by $p_t(x_t)$ and its derivatives.
The condition that $p_t$ is log-concave can also be reduced to a set of constraints  $\mathcal{R}_j,j=1,\ldots,N_2$.
Second, proof for $C_l(m,n)$ is reduced to the following problem
\begin{equation}
\label{eq-prob}
\exists p_i\in\R \hbox{ and } Q_j \hbox{ s.t. } (E -\sum_{i=1}^{N_1} p_i R_i-\sum_{j=1}^{N_2} Q_j \mathcal{R}_j = S)
\end{equation}
where $Q_j$ is a polynomial in $p_t$ and its derivatives such that $Q_j\ge0$
and
$S$ is a sum of squares (SOS).
Third, problem \eqref{eq-prob} can be solved  with the semidefinite programming (SDP)~\cite{Boyd1,Boyd2}.
There exists no guarantee that the procedure will generate a proof,
but when succeeds, it gives an exact and strict proof for $C_s(m,n)$.

Using the procedure proposed in this paper, we first prove $C_2(1,n)$, $C_2(2,n)$. Then we prove $C_3(3,2)$, $C_3(3,3)$, $C_3(3,4)$ and $C_3(4,2)$ under the condition that $p_t$ is log-concave.
$C_2(3,2)$, $C_2(3,3)$, $C_2(3,4)$, and $C_2(4,2)$ cannot be proved with the above procedure even if $p_t$ is log-concave, which motivates us to propose Conjecture 3.

\begin{table}[ht]
\centering
\begin{tabular}{lcccccccc}
\hline
 & $C_2(3,1)$ & $C_3(3,2)$ & $C_3(3,3)$ & $C_3(3,4)$&  $C_3(4,2)$&$C_2(2,n)$\\
Vars & 3  & 14 & 38 &  38&  33& 6 \\
$N_1$ & 6 & 63 &  512& 512 &  417& 8 \\
$N_2$ & 0 & 0 & 6 & 6 & 3 & 0 \\
Time & 0.18 & 0.53 & 9.00 & 9.02 &  4.49 & 0.32 \\
Proof & Yes & Yes & Yes & Yes& Yes & Yes\\
\hline
\end{tabular}
\caption{Data in computing the SOS  with SDP}
\label{tab1}
\end{table}

In Table \ref{tab1}, we give the data for computing the SOS representation \eqref{eq-prob}
using the Matlab software package in Appendix A,
where
Vars is the number of variables,
$N_1$ and $N_2$ are the numbers of constraints in \eqref{eq-prob}.
Time is the running time in seconds collected on a desktop PC with a 3.40GHz CPU
and 16G memory,
and Proof means whether a proof is given.

The procedure is inspired by the work \cite{Costa1985,Villani2000,Zhang2018,Cheng2015},
and uses basic ideas introduced therein.
In particular, our approach can be basically considered as a generalization
of~\cite{Zhang2018} from the univariate case
to the multivariate case and as a generalization of \cite{GYG2020}
by adding the log-concave constraints.
Also, the log-concave constraints considered in this paper are more general than those
in~\cite{Zhang2018}.

The rest of this paper is organized as follows.
In Section 2, we give the proof procedure and prove $C_2(1,n)$.
In Section 3, we prove $C_2(2,n)$ using the proof procedure.
In Section 4, we prove $C_3(3,2)$, $C_3(3,3)$, and $C_3(3,4)$  under the log-concave condition.
In Section 5, we prove $C_3(4,2)$ under the log-concave condition.
In Section 6, conclusions   are presented.

\section{Proof Procedure}
In this section, we give a general procedure to prove $C_s(m,n)$ for specific values of $l,m,n$.

\subsection{Notations}
  Let $[n]_0 = \{0,1,\ldots,n\}$ and $[n] = \{1,\ldots,n\}$,
 and $x_t=[x_{1,t}, \ldots,x_{n,t}]$.
To simplify the notations, we use $p_t$ to denote $p_t(x_t)$ in the rest of the paper.
Denote
$$\mathcal{P}_n =\{\frac{\partial^h p_t}{\partial^{h_1} x_{1,t}\cdots \partial^{h_n} x_{n,t}}:
h = \sum_{i=1}^n h_i, h_i\in \N\}$$
to be the set of all derivatives of $p_t$ with respect to the differential operators
$\frac{\partial}{\partial x_{i,t}},i=1,\ldots,n$
and $\R[\mathcal{P}_n]$ to be the set of polynomials in $\mathcal{P}_n$ with coefficients in $\R$.
For $v\in\mathcal{P}_n$, let $\ord(v)$   be the order of $v$.
For a monomial $\prod_{i=1}^r v_i^{d_i}$ with $v_i\in {\mathcal{P}}_n$,
its {\em degree}, {\em order}, and {\em total order}
are defined to be $\sum_{i=1}^r d_i$, $\max_{i=1}^r \ord(v_i)$,
 and  $\sum_{i=1}^r d_i\cdot \ord(v_i)$, respectively.
%

A polynomial in $\R[\mathcal{P}_n]$ is called a $k$th-order
{\em differentially homogenous polynomial} or simply
a  $k$th-order {\em differential form},
if all its monomials have degree $k$ and total order $k$.
Let $\M_{k,n}$ be the set of all monomials which have degree $k$ and total order $k$.
Then the set of $k$th-order differential forms is
an $\R$-linear vector space generated by $\M_{k,n}$, which is denoted as $\span_\R(\M_{k,n})$.

We will use Gaussian elimination in $\span_\R(\M_{k,n})$ by treating the monomials as variables.
We always use the {\em lexicographic order for the monomials} to be defined below unless mentioned otherwise.
Consider two distinct derivatives
$v_1=\frac{\partial^k p_t}{\partial^{h_1} x_{1,t}\cdots \partial^{h_n} x_{n,t}}$
and
$v_2=\frac{\partial^k p_t}{\partial^{s_1} x_{1,t}\cdots \partial^{s_n} x_{n,t}}$.
We say $v_1>v_2$ if
$h_l>s_l$ and $h_j=s_j$ for $j=l+1,\ldots,n$.
Consider two distinct monomials $m_1=\prod_{i=1}^{r} v_i^{d_i}$
and $m_2=\prod_{i=1}^{r} v_i^{e_i}$,
where $v_i\in {\mathcal{P}}_n$ and   $v_i< v_j$ for $i < j$.
We define $m_1 > m_2$ if  $d_l > e_l$, and $d_i = e_i$ for $i=l+1,\ldots,r$.
%

From \eqref{1.7}, $p_t:\R^{n+1}\rightarrow \R$ is a function in $x_t$ and $t$.
So each polynomial $f\in\R[\P_n]$ is also a function in $x_t$ and $t$,
$\widetilde{f}(t)=\int_{\R^n}f\d x_t$ is a function in $t$,
and the {\em expectation} of $f$ with respect to $x_t$
$\mathbb{E}[f] \triangleq \int_{\R^n}p_t f\d x_t$  is also a function in $t$.
By $f\ge0$, $\widetilde{f}\ge0$, and $\mathbb{E}[f]\ge0$, we mean $f(x_t,t)\ge0$,  $\widetilde{f}(t)\ge0$, and $\mathbb{E}[f](t)\ge0$ for all $x_t\in\R^n$ and $t>0$.
%

\subsection{The proof procedure}
\label{sec-p2}
In this section, we give the  procedure to prove $C_s(m,n)$,
which consists of four steps.

In step 1, we reduce the proof of $C_s(m,n)$ into the proof of an integral inequality,
as shown by the following lemma whose proof will be given in section \ref{sec-p3}.
\begin{lemma}
\label{lm-pr1}
Proof of $C_s(m,n),s=1,2,3$ can be reduced to show
\begin{equation}
\label{eq-tt1}
\begin{array}{ll}
\displaystyle{\int_{\R^n}\frac{E_{s,m,n}}{p_t^{2m-1}}\d x_t} \ge0
\end{array}
\end{equation}
where $E_{s,m,n} =\sum_{a_1=1}^n\cdots\sum_{a_m=1}^n E_{s,m,n,\a_m}$,
$\a_m=(a_1,\ldots,a_m)$, $E_{s,m,n,\a_m}$ is a $2m$th-order differential form in $\R[\P_{m,n}]$, and
\begin{equation}
\label{eq-pm}
{\P}_{m,n} =\{
\frac{\partial^h p_t}{\partial^{h_1} x_{a_1,t}\cdots \partial^{h_{m}} x_{a_m,t}}:
h\in[2m-1]_0; a_i\in[n],i\in[m]\}.
\end{equation}
\end{lemma}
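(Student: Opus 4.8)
The plan is to follow the classical differentiation-under-the-integral-sign approach pioneered by Costa, McKean and Cheng--Geng, and to make the bookkeeping systematic enough that the final inequality lands in the form \eqref{eq-tt1}. First I would recall that $p_t$ satisfies the heat equation $\partial p_t/\partial t = \frac12\Delta p_t = \frac12\sum_{i=1}^n \partial^2 p_t/\partial x_{i,t}^2$, so that every $t$-derivative of $H(X_t)$ can be traded for spatial derivatives of $p_t$. Starting from $H(X_t) = -\int_{\R^n} p_t\log p_t\,\d x_t$ and differentiating $m$ times in $t$, one repeatedly applies the heat equation and integrates by parts in $x_t$ (discarding boundary terms, which vanish because $p_t$ and all its derivatives decay rapidly — $p_t$ is a convolution with a Gaussian). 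Each integration by parts lowers the power of $t$-differentiation at the cost of raising spatial orders, and the factor $\log p_t$ eventually disappears after the first application, leaving an integrand that is a rational expression in $p_t$ and its spatial derivatives. A careful induction on $m$ shows the integrand can be arranged with a common denominator $p_t^{2m-1}$: the heat equation contributes one spatial Laplacian per $t$-derivative, and the logarithmic derivative $\partial_t\log p_t = (\partial_t p_t)/p_t$ together with the $m$-fold iteration produces exactly the denominator power $2m-1$ (this matches the known univariate computations of McKean for $m=1,2$ and Zhang--Anantharam--Geng for $m\le5$, where the denominators are $p_t, p_t^3, \dots, p_t^{2m-1}$).

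Next I would verify the structural claims about the numerator $E_{s,m,n}$. Because each $t$-derivative introduces $\tfrac12\sum_{a}\partial^2/\partial x_{a,t}^2$, after $m$ differentiations the integrand is naturally indexed by an $m$-tuple $\a_m=(a_1,\dots,a_m)\in[n]^m$ of coordinate directions, giving the sum $E_{s,m,n}=\sum_{\a_m}E_{s,m,n,\a_m}$. Each $E_{s,m,n,\a_m}$ is built from products of spatial derivatives of $p_t$; a degree/order count shows that, after clearing the denominator $p_t^{2m-1}$, every monomial has degree $2m$ (that is, $2m$ factors from $\mathcal P_n$ counted with multiplicity — one should think of the $2m-1$ copies of $p_t$ in the denominator being absorbed) and total order $2m$ (the total number of $\partial/\partial x$ operators applied), so each $E_{s,m,n,\a_m}$ is a $2m$th-order differential form in the sense defined in Section~2.1. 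Since differentiating $m$ times with the second-order Laplacian and pairing with $\log p_t$ never produces a spatial derivative of order exceeding $2m-1$ in any single factor, all derivatives appearing lie in $\P_{m,n}$ as defined in \eqref{eq-pm}, with orders $h\in[2m-1]_0$. The distinction between the three cases $s=1,2,3$ is only in how the comparison term — $0$, $H(X_{Gt})$, or $\tfrac1n H(X_{Gt})$ — is subtracted off; since $H(X_{Gt})$ is an explicit elementary function of $t$ (the Gaussian entropy $\tfrac n2\log(2\pi e(\sigma^2+t))$ up to constants), its $m$-th derivative is a constant times $(\sigma^2+t)^{-m}$ and can be absorbed into $E_{s,m,n}$ using the normalization constraint $\mathrm{Var}(X_t)=\sigma^2+t$, which itself is expressible as an integral identity in $p_t$; this is what makes the three subcases collapse into the single inequality \eqref{eq-tt1}.

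The main obstacle I expect is the precise tracking of the boundary terms and the integrability needed to justify each interchange of differentiation and integration, and — more substantively — the exact combinatorial verification that the denominator is $p_t^{2m-1}$ and not some other power, and that the numerator stays within the stated differential-form class after every integration by parts. In the univariate literature this is done by hand for small $m$; here one wants a clean inductive argument. I would set up the induction by writing $\frac{\d^m}{\d^m t}H(X_t) = \int_{\R^n} \frac{F_m(p_t)}{p_t^{2m-1}}\d x_t$ for a suitable differential form $F_m$, assume it at stage $m$, differentiate once more using $\partial_t p_t = \tfrac12\Delta p_t$, apply the quotient rule, and then integrate by parts to re-express the result with denominator $p_t^{2(m+1)-1}=p_t^{2m+1}$; the key lemma is that integration by parts against $\Delta p_t$ raises the denominator power by exactly $2$ while keeping degree and total order balanced. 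The dependence on whether $p_t$ is log-concave does not enter this reduction step at all — log-concavity will only be used later, at the SDP stage, as the extra constraints $\mathcal R_j$ — so Lemma~\ref{lm-pr1} is purely an exact algebraic identity plus the (routine but necessary) analytic justification of the manipulations, and I would present it in that order: first the formal identity by induction, then a short paragraph citing the Gaussian-convolution decay of $p_t$ to legitimize the interchanges and the vanishing of boundary terms.
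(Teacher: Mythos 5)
Your treatment of the case $s=1$ — heat equation, integration by parts, induction on $m$ to produce the denominator $p_t^{2m-1}$ and the sum over $\a_m\in[n]^m$ — is the standard route and matches what the paper imports from \cite{GYG2020} as Lemma~\ref{lm-E1}. The gap is in your handling of $s=2,3$. You propose to treat the Gaussian term as follows: since $(-1)^{m+1}(\d^m/\d^m t)H(X_{Gt})=\frac{n(m-1)!}{2(\sigma^2+t)^m}$ is an explicit function of $t$, it ``can be absorbed into $E_{s,m,n}$ using the normalization constraint $\mathrm{Var}(X_t)=\sigma^2+t$,'' yielding an exact algebraic identity. This does not work: the variance constraint is $\int\|x_t-\mu\|^2 p_t\,\d x_t=n(\sigma^2+t)$, which involves the explicit weight $\|x_t-\mu\|^2$ and is not expressible as $\int G/p_t^{2m-1}\d x_t$ for any $2m$th-order differential form $G\in\R[\P_{m,n}]$; consequently the constant $(\sigma^2+t)^{-m}$ cannot be rewritten as such an integral, and no exact identity of the claimed form exists.

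What the paper actually does (Lemmas~\ref{lm-T2} and~\ref{lm-E0}) is replace the Gaussian term by an \emph{upper bound} that is an integral of a differential form: with $T=\nabla^2\log p_t$ one has the chain
$\mathbb{E}[(-T)^m]\ge[\mathbb{E}(-T)]^m\ge[\mathbb{E}(-T_G)]^m=(-1)^{m+1}\frac{2n^{m-1}}{(m-1)!}(\d^m/\d^m t)H(X_{Gt})$,
where the middle inequality is the Fisher information inequality $J(X_t)\ge J(X_{Gt})$ under the variance constraint, and the first is a Jensen-type inequality requiring $-T\ge0$, which is where log-concavity enters. Then $\mathbb{E}[(-T)^m]=\int(\|\nabla p_t\|^2-p_t\nabla^2p_t)^m/p_t^{2m-1}\,\d x_t$ \emph{is} an integral of a $2m$th-order differential form, and $E_{2,m,n},E_{3,m,n}$ are defined by subtracting the appropriate multiple of it from $E_{1,m,n}$ as in \eqref{eq-e23}. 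Two consequences you should note: (i) the reduction for $s=2,3$ is one-directional (positivity of \eqref{eq-tt1} implies $C_s(m,n)$, not conversely), so the lemma is not ``purely an exact algebraic identity''; and (ii) your claim that ``the dependence on whether $p_t$ is log-concave does not enter this reduction step at all'' is false for $s=2,3$ and odd $m\ge3$ — log-concavity is already needed to justify $\mathbb{E}[(-T)^m]\ge[\mathbb{E}(-T)]^m$, before any SDP machinery is invoked.
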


In step 2, we compute the constraints which are relations satisfied by the probability density $p_t$ of $X_t$.
In this paper, we consider two types of constraints: integral constraints and
log-concave constraints which will be given in Lemmas \ref{lm-CS1} and \ref{lm-CS2}, respectively.
Since $E_{s,m,n}$ in \eqref{eq-tt1} is a $2m$th-order differential form, we need only
the constraints which are   $2m$th-order differential forms.

\begin{define}\label{def-41}
An $m$th-order {\em integral constraint} is a $2m$th-order differential form
$R$ in $\R[\P_{n}]$
such that $\int_{\R^n}\ \frac{R}{p_t^{2m-1}}\d x_t=0$.
\end{define}

\begin{lemma}[\cite{GYG2020}]
\label{lm-CS1}
There is a systematical  method to compute the $m$th-order integral constraints
$\C_{m,n} = \{R_i,i=1,\ldots,N_1\}$.
\end{lemma}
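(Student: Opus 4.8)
The plan is to exhibit the systematic method explicitly rather than merely assert its existence, since the construction is what is actually used downstream. The source of all integral constraints is integration by parts: for any $(2m-1)$th-order differential form $G$ in $\R[\P_{n,m}]$, writing $G/p_t^{2m-2}$ as the integrand and differentiating $\int_{\R^n} \partial_{x_{a,t}}\!\left(G/p_t^{2m-2}\right)\d x_t = 0$ (valid because $p_t$ and all its derivatives decay at infinity, being convolutions of $p$ with a Gaussian), one obtains, after clearing denominators, an identity of the form $\int_{\R^n} R/p_t^{2m-1}\,\d x_t = 0$ where $R\in\R[\P_n]$ is a $2m$th-order differential form. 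So the first step is to enumerate a generating set of the $(2m-1)$th-order forms $G$ whose derivatives land in the correct space $\span_\R(\M_{2m,n})$, namely monomials $\prod v_i^{d_i}$ of degree $2m-1$ and total order $2m-1$ built from the derivatives in $\P_{n,m}$, together with the choice of differentiation direction $a\in[n]$. Each such pair $(G,a)$ produces one candidate constraint $R_{(G,a)}$ via the Leibniz rule; this is a finite, mechanically computable list.

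The second step is to reduce this finite list to a basis. Treating each monomial in $\M_{2m,n}$ as a formal variable, the candidate constraints $R_{(G,a)}$ span a linear subspace of $\span_\R(\M_{2m,n})$; running Gaussian elimination with respect to the lexicographic monomial order defined in Section~2.1 extracts a maximal linearly independent subset $\C_{m,n}=\{R_1,\ldots,R_{N_1}\}$. This is precisely the "systematical method" claimed: generate all $(G,a)$ pairs, apply integration by parts symbolically, collect the resulting forms, and row-reduce. The output is canonical once the monomial order is fixed.

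The main obstacle — and the only real content beyond bookkeeping — is verifying that this family of integration-by-parts identities is in fact \emph{complete}, i.e.\ that every $m$th-order integral constraint in the sense of Definition~\ref{def-41} arises (modulo the linear span) from some $\partial_{x_{a,t}}(G/p_t^{2m-2})$. Equivalently, one must show that if a $2m$th-order differential form $R$ satisfies $\int_{\R^n} R/p_t^{2m-1}\,\d x_t=0$ for all admissible densities $p_t$, then $R/p_t^{2m-1}$ is a total divergence, up to terms that integrate to zero for trivial reasons. The argument here is that $R/p_t^{2m-1}$, as a differential-polynomial expression in $p_t$, can be normalized (by choosing a representative in each class) so that the vanishing of its integral against the entire family of heat-evolved densities forces it into the image of the divergence operator; this uses that the heat semigroup orbit of densities is rich enough to separate differential forms that are not divergences. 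I would either cite~\cite{GYG2020} for this completeness statement — since the lemma is attributed there — or, if a self-contained treatment is wanted, prove it by induction on the order $m$, peeling off one integration by parts at a time and tracking which monomials can be eliminated. For the purposes of this paper, since Lemma~\ref{lm-CS1} is quoted from~\cite{GYG2020}, it suffices to recall the construction above and refer to that paper for the proof that $\C_{m,n}$ generates all $m$th-order integral constraints.
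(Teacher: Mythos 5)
Your constructive step is exactly the method of \cite{GYG2020} that the paper imports without reproof: for each monomial $G$ of degree $2m-1$ and total order $2m-1$ and each direction $a\in[n]$, the identity $\int_{\R^n}\frac{\partial}{\partial x_{a,t}}\bigl(G/p_t^{2m-2}\bigr)\d x_t=0$ produces the $2m$th-order differential form $R_{(G,a)}=p_t\frac{\partial G}{\partial x_{a,t}}-(2m-2)G\frac{\partial p_t}{\partial x_{a,t}}$, and this demonstrably reproduces the constraints listed in the paper (for $m=3$, $n=1$, taking $G=f_1^5$ gives $R_1=5ff_1^4f_2-4f_1^6$ and $G=f^2f_1f_2^2$ gives $R_2$). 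The subsequent Gaussian elimination to extract a linearly independent set is likewise what the paper performs in Step S2 of its procedure.

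The one problem is the paragraph you flag as ``the only real content'': completeness of the family $\{R_{(G,a)}\}$ with respect to Definition~2.2 is neither asserted by the lemma nor used anywhere downstream. Theorem~2.5 needs only soundness --- each $R_i$ must integrate to zero against $p_t^{-(2m-1)}$ so that the term $\sum_i e_iR_i$ drops out of \eqref{eq-proof} --- and the paper explicitly concedes that its constraint set may be incomplete (``Another problem is to find more constraints besides those used in this paper''). Your sketched argument for completeness (that the heat-semigroup orbit of densities ``separates differential forms that are not divergences'') is an unproved and substantially harder claim than anything the lemma requires, and it should be removed rather than patched. With that paragraph deleted, what remains is a correct account of the cited method; the only hypothesis genuinely needing care is the decay at infinity of $G/p_t^{2m-2}$ that annihilates the boundary terms, which you note and which is standard for densities obtained by Gaussian convolution.
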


A function $f: \R^n\rightarrow\R$ is called {\em log-concave} if $\log f$
is a concave function.
In this paper, by the {\em log-concave condition}, we mean that  the density function $p_t$   is log-concave.
\begin{define}\label{def-11}
An $m$th-order {\em  log-concave constraint} is a $2m$th-order differential form
$\RC$ in $\R[\P_{n}]$ such that $\RC \ge 0$ under the log-concave condition.
\end{define}

The following lemma computes the log-concave constraints,
whose proof is given in section \ref{sec-p4}.
\begin{lemma}
\label{lm-CS2}
Let  ${\bf H}(p_t)\in\R[\P_n]^{n\times n}$ be the Hessian matrix of $p_t$,
$\nabla p_t =(\frac{\partial p_t}{\partial x_{1,t}},\ldots,\frac{\partial p_t}{\partial x_{n,t}})$,
\begin{equation}\label{eq-L1}
\mathbf{L}(p_t)\triangleq p_t{\bf H}(p_t)-\nabla^T p_t\nabla p_t,
\end{equation}
and $\triangle_{k,l},l=1,\ldots,L_k$  the $k$th-order principle minors of $\mathbf{L}(p_t)$.
Then the $m$th-order log-concave constraints are
\begin{equation}\label{eq-CL}
\CH_{m,n}=\{\prod_{i=1}^s (-1)^{k_i}\triangle_{k_i,l_i} T_{k_1,\ldots,k_s}\,|\,\sum_{i=1}^s k_i \le m\}
\end{equation}
where
$T_{k_1,\ldots,k_s}\in\span_\R(\M_{2m-2\sum_{i=1}^s k_i,n})$ and
$T_{k_1,\ldots,k_s}\ge 0$.
For convenience, denote these constraints as
\begin{equation}\label{eq-CL1}
\CH_{m,n}=\{P_j Q_j,j=1,\ldots,N_2\},
\end{equation}
where $P_j$ represents $\prod_{i=1}^s (-1)^{k_i}\triangle_{k_i,l_i}$
and $Q_j$ is the corresponding $T_{k_1,\ldots,k_s}$.
\end{lemma}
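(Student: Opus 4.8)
The plan is to exploit the characterization of log-concavity through the negative semidefiniteness of the Hessian of $\log p_t$, and then clear denominators to obtain polynomial (differential-form) constraints. First I would recall that $p_t$ log-concave means $\log p_t$ is concave, i.e.\ the Hessian $\mathbf{H}(\log p_t)$ is negative semidefinite wherever $p_t>0$. A direct computation gives $\mathbf{H}(\log p_t) = \frac{1}{p_t^2}\bigl(p_t\mathbf{H}(p_t) - \nabla^T p_t\,\nabla p_t\bigr) = \frac{1}{p_t^2}\mathbf{L}(p_t)$, so since $p_t>0$ the log-concave condition is equivalent to $\mathbf{L}(p_t)\preceq 0$. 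Then I would invoke the standard linear-algebra fact that a symmetric matrix $M$ is negative semidefinite if and only if all its $k$th-order principal minors satisfy $(-1)^k\det(M_{k}) \ge 0$ for every $k$ and every choice of the principal submatrix; applying this to $M=\mathbf{L}(p_t)$ yields $(-1)^{k}\triangle_{k,l}\ge 0$ for all orders $k\le n$ and all $l\le L_k$.

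Second, I would track degrees and orders to confirm these quantities live in the right space. Each entry of $\mathbf{L}(p_t)$ is a $2$nd-order differential form (degree $2$, total order $2$): indeed $p_t\,\partial^2 p_t/\partial x_i\partial x_j$ has degree $2$ and total order $0+2=2$, and $(\partial p_t/\partial x_i)(\partial p_t/\partial x_j)$ has degree $2$ and total order $1+1=2$. Hence the $k\times k$ principal minor $\triangle_{k,l}$, being a sum of products of $k$ such entries, is a $2k$th-order differential form, so $(-1)^{k}\triangle_{k,l}\in\span_\R(\M_{2k,n})$ and is nonnegative under the log-concave condition. Taking a product $\prod_{i=1}^s (-1)^{k_i}\triangle_{k_i,l_i}$ over indices with $\sum k_i\le m$ gives a nonnegative differential form of order $2\sum k_i \le 2m$; multiplying by any nonnegative $T_{k_1,\ldots,k_s}\in\span_\R(\M_{2m-2\sum k_i,n})$ (which is automatically a sum of squares pattern chosen in the SDP step) pads it up to a $2m$th-order differential form that remains nonnegative. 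This establishes that every element of $\CH_{m,n}$ in \eqref{eq-CL} is indeed an $m$th-order log-concave constraint in the sense of Definition \ref{def-11}, and the renaming \eqref{eq-CL1} is purely notational.

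The main obstacle I anticipate is the bookkeeping in the second step rather than any deep idea: one must be careful that the "principal minors" are taken over \emph{all} subsets of indices (not just leading ones) so that negative semidefiniteness — as opposed to negative \emph{definiteness} — is correctly captured, and one must verify the degree/total-order accounting is exactly homogeneous so the products land in $\span_\R(\M_{2m,n})$ and not merely in a filtered piece of it. A secondary subtlety is the role of the factors $T_{k_1,\ldots,k_s}$: the lemma as stated simply asserts their existence with $T_{k_1,\ldots,k_s}\ge 0$, so the proof need only note that multiplying a nonnegative quantity by a nonnegative differential form of complementary order yields a nonnegative $2m$th-order differential form; the actual \emph{choice} of these $T$'s is deferred to the semidefinite-programming search in \eqref{eq-prob}. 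Modulo these careful-but-routine verifications, the proof is a direct application of the equivalence "$\log p_t$ concave $\iff \mathbf{L}(p_t)\preceq 0$" together with the principal-minor criterion for negative semidefiniteness.
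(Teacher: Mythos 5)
Your proposal is correct and follows essentially the same route as the paper: reduce log-concavity to $\mathbf{L}(p_t)\preceq 0$ (the paper cites a convexity reference for this equivalence where you derive it directly from $\mathbf{H}(\log p_t)=p_t^{-2}\mathbf{L}(p_t)$), apply the all-principal-minors criterion for negative semidefiniteness, and check the degree/total-order bookkeeping so that the products $\prod_i(-1)^{k_i}\triangle_{k_i,l_i}T_{k_1,\ldots,k_s}$ are nonnegative $2m$th-order differential forms. Your explicit caution about using all principal minors (not just leading ones) is a point the paper handles implicitly via the range $1\le l\le\binom{n}{k}$.
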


In step 3, we give a procedure to write $E_{s,m,n}$ as an SOS under the constraints, detail of which will be given in section \ref{sec-p5}.
\begin{proc}
\label{lm-sos1}
For $E_{s,m,n}$ in Lemma \ref{lm-pr1},
$\C_{m,n} = \{R_i,i=1,\ldots,N_1\}$ in Lemma \ref{lm-CS1}, and
$\CH_{m,n} = \{P_jQ_j,j=1,\ldots,N_2\}$ in Lemma \ref{lm-CS2},
the procedure computes $e_l\in\R$ and $Q_{j}\in\span_\R(\M_{2(m-\deg P_j),n})$ such that
\begin{eqnarray}
&&E_{s,m,n}-\sum_{i=1}^{N_1} e_i R_i-\sum_{j=1}^{N_2} P_jQ_j =S\quad\hbox{\rm and }\label{eq-S1}\\
&&Q_{j}\ge0,j=1,\ldots,N_2\label{eq-S2}
\end{eqnarray}
where $S$ is an SOS. The procedure is not complete in the sense that it may fail to find
$e_i$ and $Q_j$.
\end{proc}

To summarize the proof procedure, we have
\begin{theorem}
\label{th-m1}
If Procedure \ref{lm-sos1} finds \eqref{eq-S1} and \eqref{eq-S2} for certain $s, m,n$,
then  $C_s(m,n)$ is true.
\end{theorem}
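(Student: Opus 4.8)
The plan is to simply chain together the three ingredients already assembled in Lemmas \ref{lm-pr1}, \ref{lm-CS1}, \ref{lm-CS2} and Procedure \ref{lm-sos1}. First I would invoke Lemma \ref{lm-pr1}: proving $C_s(m,n)$ is equivalent to showing the integral inequality $\int_{\R^n} E_{s,m,n}/p_t^{2m-1}\,\d x_t\ge 0$, where $E_{s,m,n}$ is a $2m$th-order differential form in $\R[\P_{m,n}]$. So it suffices to exhibit a pointwise argument that forces this integral to be nonnegative.

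Next I would use the output of Procedure \ref{lm-sos1}, namely the identity \eqref{eq-S1},
\[
E_{s,m,n}=S+\sum_{i=1}^{N_1} e_i R_i+\sum_{j=1}^{N_2} P_j Q_j,
\]
together with \eqref{eq-S2}, $Q_j\ge 0$. Dividing by $p_t^{2m-1}$ and integrating over $\R^n$, the middle sum drops out term by term: each $R_i$ is an $m$th-order integral constraint, so $\int_{\R^n} R_i/p_t^{2m-1}\,\d x_t=0$ by Definition \ref{def-41}. For the last sum, when the log-concave condition holds, each $P_j$ is a product of signed principal minors $(-1)^{k_i}\triangle_{k_i,l_i}$ of $\mathbf{L}(p_t)$, which by Lemma \ref{lm-CS2} is $\ge 0$; and $Q_j\ge 0$ by construction; hence $P_jQ_j\ge 0$ pointwise, so $\int_{\R^n} P_jQ_j/p_t^{2m-1}\,\d x_t\ge 0$ since $p_t>0$. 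Finally $S$ is a sum of squares, so $S\ge 0$ pointwise and $\int_{\R^n} S/p_t^{2m-1}\,\d x_t\ge 0$. Adding the three contributions gives $\int_{\R^n} E_{s,m,n}/p_t^{2m-1}\,\d x_t\ge 0$, which by Lemma \ref{lm-pr1} yields $C_s(m,n)$. When $N_2=0$ (the cases not requiring log-concavity, e.g. $C_2(1,n)$, $C_2(2,n)$), the same argument works without invoking any extra hypothesis.

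There is essentially no obstacle at this level: the theorem is a bookkeeping statement that records the soundness of the procedure, and all the real work has been deferred to the lemmas and to the (incomplete, SDP-based) Procedure \ref{lm-sos1}. The only point that needs a word of care is integrability — one must know that $E_{s,m,n}/p_t^{2m-1}$, $R_i/p_t^{2m-1}$, and $S/p_t^{2m-1}$ are all integrable so that the linear identity \eqref{eq-S1} may be integrated term by term; this is implicit in the statements of Lemma \ref{lm-pr1} and Definition \ref{def-41} (the integrals there are asserted to exist), so I would just remark that the decomposition \eqref{eq-S1} is a finite $\R$-linear combination of such integrable terms and linearity of the integral applies. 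Thus the proof is a short three-line verification once the machinery is in place.
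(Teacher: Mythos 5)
Your proposal is correct and follows essentially the same route as the paper's own proof: integrate the identity \eqref{eq-S1} divided by $p_t^{2m-1}$, kill the $R_i$ terms by Definition \ref{def-41}, drop the nonnegative $P_jQ_j$ terms under the log-concave condition (or set $Q_j=0$ when it is not needed), and bound below by the integral of the SOS term, then invoke Lemma \ref{lm-pr1}. Your added remark on term-by-term integrability is a reasonable point of care that the paper leaves implicit.
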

\begin{proof}
By Lemma \ref{lm-pr1}, we have the following proof for   $C_s(m,n)$:
\begin{equation}
\label{eq-proof}
\begin{array}{ll}
\int_\R \frac{{E}_{t,m,n}}{p_t^{2m-1}}\d x_t
&\overset{\eqref{eq-S1}}{=}
\int_\R \frac{\sum_{i=1}^{N_1} e_i R_i+\sum_{j=1}^{N_2} P_jQ_j + S}{p_t^{2m-1}}\d x_t\\[0.2cm]
&\overset{S1}{=}\int_\R \frac{\sum_{j=1}^{N_2} P_jQ_j + S}{p_t^{2m-1}}\d x_t
\overset{S2}{\geq} \int_\R \frac{S}{p_t^{2m-1}}\d x_t
\overset{S3}{\geq}0.
\end{array}
\end{equation}
Equality S1 is true, because $R_i$ is an integral constraint by Lemma \ref{lm-CS1}.
By Lemma \ref{lm-CS2} and \eqref{eq-S2}, $P_jQ_j\ge0$ is true under the log-concave condition,
so inequality S2 is true under the log-concave condition.
If the log-concave condition is not needed, we may   set $Q_j=0$ for all $j$.
Finally, inequality S3 is true, because $S\ge0$ is an SOS.
\end{proof}

\subsection{Proof of Lemma \ref{lm-pr1}}
\label{sec-p3}

Costa~\cite{Costa1985} proved the following basic properties for $p_t$ and $H(X_t)$
\begin{eqnarray}
\frac{\d p_t}{\d t}&=&\frac{1}{2}\nabla^2p_t,\label{H1}\\
\frac{\d H(X_t)}{\d t}&=& -\frac{1}{2}\mathbb{E}[\nabla^2\log p_t] =
\frac{1}{2}\int_{\R^n}\frac{\|\nabla p_t\|^2}{p_t}\d x_t,\label{H2}
%
\end{eqnarray}
where
$\nabla p_t =(\frac{\partial p_t}{\partial x_{1,t}},\ldots,
\frac{\partial p_t}{\partial x_{n,t}})$, $\nabla^2p_t=\sum\limits_{i=1}^{n}\frac{\partial^2p_t}{\partial^2 x_{i,t}}$,
$\mathbb{E}[\nabla^2\log p_t]$
is the {expecttation} of $\nabla^2\log p_t$.
Equation \eqref{H1} shows that $p_t$ satisfies the {\em heat equation}.
%

For $s=1$, Lemma \ref{lm-pr1} was proved in \cite{GYG2020}:
\begin{lemma}[\cite{GYG2020}]
\label{lm-E1}
For $m\in\N_{m>1}$, we have
\begin{equation}
\label{eq-E1}
\begin{array}{ll}
(-1)^{m+1}(\d^m/\d^m t)H(X_t)
=\displaystyle{\int_{\R^n}\frac{E_{1,m,n}}{p_t^{2m-1}(x_t)}\d x_t},
\end{array}
\end{equation}
where $E_{1,m,n}= p_t^{2m-1} [(-1)^{m+1}\frac{1}{2}\frac{\d^{m-1}}{\d^{m-1} t}(\frac{\|\nabla p_t\|^2}{p_t}) =\sum_{a_1=1}^n\cdots\sum_{a_m=1}^n E_{1,m,n,\a_m}$
is a $2m$th-order differential form in $\R[\P_{m,n}]$.
\end{lemma}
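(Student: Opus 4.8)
The plan is to establish \eqref{eq-E1} by induction on $m$, differentiating the base identity \eqref{H2} repeatedly and using the heat equation \eqref{H1} to convert each $t$-derivative into a spatial differential expression. The base case $m=1$ is already contained in \eqref{H2}: writing $E_{1,1,n} = p_t\cdot\frac{1}{2}\frac{\|\nabla p_t\|^2}{p_t} = \frac{1}{2}\|\nabla p_t\|^2$, one sees $E_{1,1,n}=\sum_{a=1}^n \frac{1}{2}(\partial_{x_{a,t}}p_t)^2$, which is a $2$nd-order differential form in $\R[\P_{1,n}]$. For the inductive step, I would assume \eqref{eq-E1} holds at level $m-1$, differentiate both sides in $t$, and move $\d/\d t$ inside the integral (justified by the regularity of $p_t$ as a convolution with a Gaussian, which is smooth and decays rapidly, so differentiation under the integral sign is legitimate for $t>0$).

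The main computation is to show that $\frac{\d}{\d t}\bigl(E_{1,m-1,n}/p_t^{2m-3}\bigr)$, after using $\partial_t p_t = \tfrac12\nabla^2 p_t$ to eliminate all time derivatives, equals $E_{1,m,n}/p_t^{2m-1}$ for an $E_{1,m,n}$ that is again a sum over indices $\a_m\in[n]^m$ of $2m$th-order differential forms lying in $\R[\P_{m,n}]$. The key bookkeeping facts are: (i) each application of $\partial_t$ replaces a factor by its spatial Laplacian, raising the order by $2$; (ii) the quotient rule introduces a factor of $p_t$ in the denominator, moving from $p_t^{2m-3}$ to $p_t^{2m-1}$ after clearing; and (iii) the degree and total order both increase by exactly $2$ at each step, so the differential-homogeneity (degree $=$ total order $= 2m$) is preserved — this is where the "differential form" structure is exactly matched to the heat equation's scaling. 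I would verify the index range $h\in[2m-1]_0$ in \eqref{eq-pm} by noting that a derivative of $p_t$ of order $2m$ would only arise from differentiating $p_t$ itself $2m$ times, but in $E_{1,m,n}$ every monomial has degree $2m$ and at least one factor is $p_t$ or a low-order derivative in the denominator-clearing, so the highest order appearing among the numerator factors is $2m-1$.

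The hard part will be the explicit combinatorial control of which monomials appear and confirming that no term of order $2m$ or higher survives — i.e., pinning down $\P_{m,n}$ precisely rather than just bounding it. One clean way to organize this is to carry along, as part of the inductive hypothesis, the stronger statement that $E_{1,m,n}$ is a $\Z$-linear (or $\Q$-linear) combination of monomials in $\M_{2m,n}$ whose factors all lie in $\P_{m,n}$, and to check that each of the three operations above — $\partial_t$ acting on a single factor, summation of the resulting terms, and multiplication/division by $p_t$ — preserves membership in that class. Since \eqref{eq-E1} for $s=1$ is attributed to \cite{GYG2020}, I would present this as a recollection of their argument with the heat-equation substitution made explicit, and would not belabor the coefficient arithmetic; the structural claims (order $\le 2m-1$, degree $=$ total order $=2m$) are the only points genuinely needed downstream in Procedure \ref{lm-sos1}.
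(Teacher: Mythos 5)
The paper does not prove this lemma here — it is quoted verbatim from \cite{GYG2020} — but your reconstruction (induct on $m$, differentiate \eqref{H2} under the integral sign, and use the heat equation \eqref{H1} to trade each $\d/\d t$ for $\tfrac12\nabla^2$, tracking degree, total order, and the power of $p_t$ needed to clear denominators) is exactly the argument behind the cited result and behind the worked computations in Sections 2.6 and 3--5. One small point to tighten: the bound $\ord \le 2m-1$ in \eqref{eq-pm} does not follow from degree $=$ total order $=2m$ alone (a monomial $p_t^{2m-1}\cdot v$ with $\ord(v)=2m$ would satisfy both), so it must come from the inductive bookkeeping you sketch — each $t$-differentiation raises the order of exactly one factor by $2$, and the factors start at order $\le 1$, so after $m-1$ steps no factor exceeds order $1+2(m-1)=2m-1$; your proposed strengthened inductive hypothesis handles this correctly.
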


To prove  Lemma \ref{lm-pr1} for $s=2,3$,
 we need to compute $(\d^m/\d^m t)H(X_{Gt})$.
Let $X_{G}\thicksim N_n(\mu,\sigma^2I)$ be an $n$-dimensional Gaussian random vector and  $X_{Gt}\triangleq X_{G}+Z_t$, where $Z_t\thicksim N_n(0,tI)$ is introduced in Section 1.
Then $X_{Gt}\thicksim N_n(\mu,(\sigma^2+t)I)$ and the probability density of
$X_{Gt}$ is
$\begin{array}{ll}
  \widehat{p}_t=\frac{1}{(2\pi(\sigma^2+t))^{n/2}}{\rm exp}(-\frac{1}{2(\sigma^2+t)}\|x_t-\mu\|^2).
\end{array}$
%
%
%

\begin{lemma}
\label{lm-T2}
Let $T=\nabla^2{\rm log}p_t$ and $T_G=\nabla^2{\rm log}\widehat{p}_{t}$.
Then  under the log-concave condition, we have
\begin{equation}\label{Ine-chain}
\begin{array}{ll}
\mathbb{E}[(-T)^m]\overset{(a)}{\geq} [\mathbb{E}(-T)]^m
\overset{(b)}{\geq} [\mathbb{E}(-T_G)]^m
\overset{(c)}{=}(-1)^{m+1}\frac{2n^{m-1}}{(m-1)!}(\d^m/\d^m t)H(X_{Gt}).
\end{array}
\end{equation}
 %
\end{lemma}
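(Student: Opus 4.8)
The plan is to establish the three links $(a)$, $(b)$, $(c)$ separately. Two preliminaries first. Since $p_t$ is a convolution of $p$ with a Gaussian kernel (see \eqref{1.7}), it is $C^\infty$ and, together with all its derivatives, decays rapidly; combined with the hypothesis $\mathrm{Var}(X_t)=\sigma^2+t<\infty$, this makes every expectation below finite and legitimizes the integrations by parts used. Moreover, under the log-concave condition $\log p_t$ is concave, so its Hessian matrix is negative semidefinite, whence $T=\nabla^2\log p_t$, being the trace of that matrix, satisfies $T\le 0$ pointwise, i.e. $-T\ge 0$. For link $(a)$ I would apply Jensen's inequality to the map $u\mapsto u^m$ with respect to the probability measure $p_t\,\d x_t$: for even $m$ it is convex on all of $\R$, and for odd $m$ it is convex on $[0,\infty)$, which contains the range of $-T$ by the previous remark; in either case $\mathbb{E}[(-T)^m]\ge(\mathbb{E}[-T])^m$.

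For link $(c)$ I would compute directly from $\widehat{p}_t$. Since $X_{Gt}\sim N_n(\mu,(\sigma^2+t)I)$, we have $\log\widehat{p}_t=-\frac n2\log\!\big(2\pi(\sigma^2+t)\big)-\frac{\|x_t-\mu\|^2}{2(\sigma^2+t)}$, so $T_G=\nabla^2\log\widehat{p}_t=-\frac n{\sigma^2+t}$ is a constant and $[\mathbb{E}(-T_G)]^m=\frac{n^m}{(\sigma^2+t)^m}$. On the other hand $H(X_{Gt})=\frac n2\log\!\big(2\pi e(\sigma^2+t)\big)$, hence $(\d^m/\d^m t)H(X_{Gt})=\frac n2\cdot\frac{(-1)^{m-1}(m-1)!}{(\sigma^2+t)^m}$; multiplying by $(-1)^{m+1}\frac{2n^{m-1}}{(m-1)!}$ and using $(-1)^{m+1}(-1)^{m-1}=1$ yields exactly $\frac{n^m}{(\sigma^2+t)^m}$, matching $[\mathbb{E}(-T_G)]^m$. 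This gives the equality $(c)$.

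For link $(b)$, observe that $\mathbb{E}[-T]\ge 0$ and $\mathbb{E}[-T_G]=\frac n{\sigma^2+t}>0$ and $m\in\N$, so it suffices to prove $\mathbb{E}[-T]\ge\mathbb{E}[-T_G]=\frac n{\sigma^2+t}$. By \eqref{H2}, $\mathbb{E}[-T]=\mathbb{E}[-\nabla^2\log p_t]=\int_{\R^n}\frac{\|\nabla p_t\|^2}{p_t}\,\d x_t=\int_{\R^n}p_t\,\|\nabla\log p_t\|^2\,\d x_t=\mathrm{tr}\,\mathcal J(X_t)$, where $\mathcal J(X_t)$ is the Fisher information matrix with entries $\mathbb{E}[\partial_i\log p_t\,\partial_j\log p_t]$. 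The matrix Cramér--Rao inequality gives $\mathcal J(X_t)\succeq K_t^{-1}$ with $K_t=\mathrm{Cov}(X_t)$ — its proof is a one-line Schur-complement argument, using that integration by parts yields $\mathbb{E}[(X_t-\mathbb{E}X_t)(\nabla\log p_t)^T]=-I$. Taking traces, $\mathbb{E}[-T]\ge\mathrm{tr}\,K_t^{-1}=\sum_{i=1}^n 1/\lambda_i$, with $\lambda_i$ the eigenvalues of $K_t$; finally the constraint $\mathrm{Var}(X_t)=\sigma^2+t$, read as $\frac1n\mathrm{tr}\,K_t=\sigma^2+t$, i.e. $\sum_i\lambda_i=n(\sigma^2+t)$, together with the AM--HM inequality gives $\sum_i 1/\lambda_i\ge n^2/\sum_i\lambda_i=\frac n{\sigma^2+t}$, as required. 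If the constraint is instead read as $K_t=(\sigma^2+t)I$, the AM--HM step is unnecessary.

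I expect link $(b)$ to be the main obstacle: one must pin down the precise reading of the multivariate constraint $\mathrm{Var}(X_t)=\sigma^2+t$ and invoke (or reprove, e.g. coordinate-wise via the scalar Cramér--Rao bound and the subadditivity of Fisher information under marginalization) the matrix Cramér--Rao inequality, equivalently the fact that a Gaussian minimizes the Fisher information among densities with prescribed covariance. By contrast $(a)$ is just Jensen and $(c)$ is a direct Gaussian computation. I would also note that the log-concave hypothesis is used only in $(a)$, and only for odd $m$ (to place the range of $-T$ inside the region of convexity of $u\mapsto u^m$); links $(b)$ and $(c)$ hold without it.
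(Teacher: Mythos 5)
Your proof is correct and follows essentially the same route as the paper: link $(a)$ from $-T\ge 0$ under log-concavity plus Jensen/power-mean, link $(b)$ from the Gaussian minimality of Fisher information ($J(X_t)\ge J(X_{Gt})$, which the paper simply cites from the literature and you rederive via the matrix Cram\'er--Rao bound and AM--HM), and link $(c)$ by direct differentiation of $H(X_{Gt})$. The only real difference is self-containedness: you make explicit the Jensen step that the paper compresses into ``$T\le 0$ implies $(a)$,'' and you supply a proof of the Fisher-information inequality rather than a citation.
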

\begin{proof}
We claim $T\le0$  under the log-concave condition, which
implies inequality $(a)$.
From \eqref{H1},
\begin{equation}\label{eq-T2}
T=\frac{p_t\nabla^2p_t-\|\nabla p_t\|^2}{p_t^2}
=\frac{1}{p_t^2}\sum\limits_{a=1}^{n}(p_t\frac{\partial^2p_t}{\partial^2x_{a,t}}-(\frac{\partial p_t}{\partial x_{a,t}})^2).
\end{equation}
By Lemma \ref{lm-CS2},
under the log-concave condition $\triangle_{1,a}=p_t\frac{\partial^2p_t}{\partial^2x_{a,t}}-(\frac{\partial p_t}{\partial x_{a,t}})^2\leq0$ for $a=1,\ldots,n$, so  $T\leq0$ and the claim is proved.
%

To prove inequality $(b)$, we need the concept of {\em Fisher information} \cite{Rioul2011}:
$J(X_t)\triangleq\mathbb{E}\left(\frac{\|\nabla p_t\|^2}{p_t^2}\right).$
By simple computation, we have
\begin{eqnarray}
&&T_G=\nabla^2{\rm log}\widehat{p}_t=-\frac{n}{\sigma^2+t},\label{ine-c1}\\
&&\mathbb{E}(-T)=-\mathbb{E}(\nabla^2{\rm log}p_t)\overset{\eqref{H2}}{=}\int\frac{\|\nabla p_t(x_t)\|^2}{p_t(x_t)}\d x_t = J(X_t).\label{ine-c2}
\end{eqnarray}
From \cite{park2013,Rioul2011}, we have $J(X_t)\geq J(X_{Gt})$.
Then
$\mathbb{E}(-T)=J(X_t)\geq J(X_{Gt})\overset{\eqref{ine-c2}}{=}\mathbb{E}(-T_G)
\overset{\eqref{ine-c1}}{>}0$, and hence inequality $(b)$.

For equation $(c)$, we first have
$H(X_{Gt})=\frac{n}{2}+\frac{n}{2}{\log(2\pi)}+\frac{n}{2}\log(\sigma^2+t)$
and then equation $(c)$:
$(-1)^{m+1}\frac{\d^m H(X_{Gt})}{\d^m t}=\frac{n(m-1)!}{2(\sigma^2+t)^m}\overset{\eqref{ine-c1}}=\frac{(m-1)!}{2n^{m-1}}[\mathbb{E}(-T_G)]^m.$
\end{proof}

%
\begin{lemma}
\label{lm-E0}
For $m\in\N_{m>1}$, we have
\begin{equation}
\label{eq-E0}
\mathbb{E}[(-T)^m] =
\int_\R^n\frac{E_{0,m,n}}{p_t^{2m-1}}dx_t
\end{equation}
where $E_{0,m,n}=\sum_{a_1=1}^n\cdots\sum_{a_m=1}^n E_{0,m,n,\a_m}$,
$\a_m=(a_1,\ldots,a_m)$, and $E_{0,m,n,\a_m}$ is a $2m$th-order differential form in $\R[\P_{m,n}]$.
\end{lemma}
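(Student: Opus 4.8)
\textbf{Proof proposal for Lemma \ref{lm-E0}.}
The plan is to mimic the derivation that yielded $E_{1,m,n}$ in Lemma \ref{lm-E1}, but starting from the expression $\mathbb{E}[(-T)^m]$ instead of $(-1)^{m+1}\tfrac12\tfrac{\d^{m-1}}{\d^{m-1}t}\bigl(\tfrac{\|\nabla p_t\|^2}{p_t}\bigr)$. First I would rewrite $-T$ in the form already exhibited in \eqref{eq-T2}, namely $-T = \tfrac{1}{p_t^2}\sum_{a=1}^n\bigl((\tfrac{\partial p_t}{\partial x_{a,t}})^2 - p_t\tfrac{\partial^2 p_t}{\partial^2 x_{a,t}}\bigr)$, so that each factor $-T$ contributes a numerator that is a second-order differential form divided by $p_t^2$. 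Multiplying $m$ such factors together gives
\[
(-T)^m = \frac{1}{p_t^{2m}}\sum_{a_1=1}^n\cdots\sum_{a_m=1}^n\ \prod_{i=1}^m\Bigl(\bigl(\tfrac{\partial p_t}{\partial x_{a_i,t}}\bigr)^2 - p_t\tfrac{\partial^2 p_t}{\partial^2 x_{a_i,t}}\Bigr).
\]
Each summand is, after expansion, a $2m$th-order differential form in the variables $\P_{m,n}$ of \eqref{eq-pm}: it has degree $2m$ (each of the $m$ factors contributes degree $2$) and total order $2m$ (each factor contributes total order $2$). Then $\mathbb{E}[(-T)^m] = \int_{\R^n} p_t\,(-T)^m\,\d x_t = \int_{\R^n} \tfrac{1}{p_t^{2m-1}}\sum_{\a_m}\prod_{i=1}^m(\cdots)\,\d x_t$, and setting $E_{0,m,n,\a_m}$ equal to the $\a_m$-indexed summand $\prod_{i=1}^m\bigl((\tfrac{\partial p_t}{\partial x_{a_i,t}})^2 - p_t\tfrac{\partial^2 p_t}{\partial^2 x_{a_i,t}}\bigr)$ gives exactly \eqref{eq-E0}.

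The remaining points to verify are bookkeeping: that $E_{0,m,n,\a_m}$ lies in $\R[\P_{m,n}]$ with $\P_{m,n}$ as in \eqref{eq-pm}, and that it is a genuine $2m$th-order differential form in the sense defined in Section 2.1. For the first, observe that only the derivatives $\tfrac{\partial p_t}{\partial x_{a_i,t}}$ and $\tfrac{\partial^2 p_t}{\partial^2 x_{a_i,t}}$ appear, all of order $\le 2 \le 2m-1$, and each differentiation is with respect to one of the coordinates indexed by $a_1,\ldots,a_m$, so the set of derivatives occurring is a subset of $\P_{m,n}$. For the second, I would simply note (as above) that every monomial in the expanded product has degree exactly $2m$ and total order exactly $2m$, because each of the $m$ factors is itself a second-order differential form — this is the homogeneity property closed under products, which I would either cite from Section 2.1 or verify in one line. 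Finally I should check the convergence of the integral $\int p_t(-T)^m\,\d x_t$, i.e.\ that $\mathbb{E}[(-T)^m]$ is well defined; this can be taken from the standing assumptions under which $H(X_t)$ and all its derivatives are finite, the same assumptions used implicitly in Lemma \ref{lm-E1}.

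I do not expect any serious obstacle here; the statement is essentially a normal form for $\mathbb{E}[(-T)^m]$ and the content is the routine expansion and degree count just sketched. The one place requiring a little care is making sure the order bound in \eqref{eq-pm} (namely $h \in [2m-1]_0$) is not violated — but since the representation $-T = \tfrac{1}{p_t^2}(\cdots)$ uses only first and second derivatives of $p_t$, and we never differentiate further, the maximal order appearing is $2$, comfortably within the bound for all $m\ge 1$. Thus the proof is a direct computation, and I would present it in roughly the three steps above: (i) substitute the formula \eqref{eq-T2} for $-T$ and take the $m$th power; (ii) interchange power and sum to exhibit the $\a_m$-indexed decomposition and read off $E_{0,m,n,\a_m}$; (iii) verify the degree/order/membership claims.
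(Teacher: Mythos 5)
Your proposal is correct and follows essentially the same route as the paper: substitute \eqref{eq-T2} to get $\mathbb{E}[(-T)^m]=\int(\|\nabla p_t\|^2-p_t\nabla^2p_t)^m/p_t^{2m-1}\,\d x_t$, expand the $m$th power of the sum over coordinates to obtain the $\a_m$-indexed decomposition, and read off the degree and total order. The paper's proof is just a terser version of your steps (i)–(iii), so no further changes are needed.
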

\begin{proof}
From \eqref{eq-T2}, we have
$\mathbb{E}[(-T)^m]
=\int\frac{(\|\nabla p_t\|^2-p_t\nabla^2p_t)^m}{p_t^{2m-1}}dx_t$,
so $E_{0,m,n} = (\|\nabla p_t\|^2-p_t\nabla^2p_t)^m=\sum_{a_1=1}^n\cdots\sum_{a_m=1}^n E_{0,m,n,\a_m}$,
where $E_{0,m,n,\a_m}$ is a $2m$th-order differentially form in $\R[\P_{m,n}]$,
since $\ord(\|\nabla p_t\|^2-p_t\nabla^2p_t)=2$ and $m>1$.
\end{proof}

We can now prove Lemma \ref{lm-pr1} for $s=2,3$. Let
%
\begin{equation}
\label{eq-e23}
\begin{array}{ll}
E_{2,m,n} = E_{1,m,n} - \frac{(m-1)!}{2n^{m-1}} E_{0,m,n}\\
E_{3,m,n} = E_{1,m,n} - \frac{(m-1)!}{2n^{m}} E_{0,m,n}
\end{array}
\end{equation}
where $E_{1,m,n}$ and $E_{0,m,n}$ are  from Lemmas \ref{lm-E1} and \ref{lm-E0}.
By Lemma  \ref{lm-T2},   $C_s(m,n)$ is true if
$\int_\R^n\frac{E_{s,m,n}}{p_t^{2m-1}}dx_t\ge0$ for $l=2,3$.

As a consequence of Lemma \ref{lm-T2}, we can prove $C_2(1,n)$, that is
\begin{theorem}
Subject to $Var(X_t)=(\sigma^2+t)\times I$,  $(-1)^{n+1}\frac{\d}{\d t}H(X_t)$
achieves the minimum when  $X_t$ is Gaussian with variance $(\sigma^2+t)\times I$
for $t>0$ and $n\geq1$.
\end{theorem}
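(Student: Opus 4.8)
The plan is to obtain $C_2(1,n)$ as an immediate corollary of Lemma~\ref{lm-T2} specialized to $m=1$, without invoking the SDP procedure. First I would note that $(-1)^{m+1}=1$ when $m=1$, so the assertion is simply $\frac{\d}{\d t}H(X_t)\ge \frac{\d}{\d t}H(X_{Gt})$ under the variance constraint, and that the Gaussian $X_{Gt}$ satisfies this constraint by construction and attains the bound with equality; hence only the lower bound requires argument.

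Next I would translate both sides into Fisher information. From \eqref{H2} together with \eqref{ine-c2} we have $2\,\frac{\d}{\d t}H(X_t)=\mathbb{E}\!\left[\|\nabla p_t\|^2/p_t^2\right]=J(X_t)=\mathbb{E}[-T]$, where $T=\nabla^2\log p_t$. On the Gaussian side, \eqref{ine-c1} gives $T_G=-n/(\sigma^2+t)$, so $2\,\frac{\d}{\d t}H(X_{Gt})=\mathbb{E}[-T_G]=J(X_{Gt})=n/(\sigma^2+t)$; this is exactly the $m=1$ instance of equation $(c)$ in Lemma~\ref{lm-T2}, where $(m-1)!=1$ and $n^{m-1}=1$ make the constant trivial.

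With these identities, the desired inequality reduces to $J(X_t)\ge J(X_{Gt})$, which is precisely inequality $(b)$ of Lemma~\ref{lm-T2}: among random vectors with covariance $(\sigma^2+t)I$ the Gaussian minimizes the Fisher information \cite{park2013,Rioul2011}. Chaining everything, $\frac{\d}{\d t}H(X_t)=\tfrac12 J(X_t)\ge\tfrac12 J(X_{Gt})=\frac{\d}{\d t}H(X_{Gt})$, and the minimum is attained by the Gaussian, which gives the theorem.

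I expect essentially no obstacle here; the only points demanding care are the constant bookkeeping in equation $(c)$ of Lemma~\ref{lm-T2} at $m=1$, and the observation that — in contrast to the cases $m\ge 2$ — the log-concavity hypothesis is not needed, since inequality $(a)$ of Lemma~\ref{lm-T2} (the sole place where $T\le 0$ is used) is a trivial equality when $m=1$. Thus the whole content of $C_2(1,n)$ is carried by the classical Fisher-information inequality $J(X_t)\ge J(X_{Gt})$ subject to the variance constraint.
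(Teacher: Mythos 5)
Your proposal is correct and follows essentially the same route as the paper: both reduce the claim via \eqref{H2} and \eqref{ine-c2} to the Fisher-information identity $\frac{\d}{\d t}H(X_t)=\tfrac12\mathbb{E}(-T)=\tfrac12 J(X_t)$ and then invoke inequality $(b)$ of Lemma~\ref{lm-T2}, i.e.\ the classical bound $J(X_t)\ge J(X_{Gt})$ under the variance constraint. Your added observation that log-concavity is not needed here (since inequality $(a)$ is vacuous at $m=1$) is accurate and consistent with the paper's treatment.
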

\begin{proof}
By \eqref{Ine-chain},
 $\mathbb{E}(-T)\geq\mathbb{E}(-T_G)$.
By \eqref{H2} and \eqref{ine-c2},
$(\d/\d t)H(X_t)=\frac{1}{2}\int\frac{\|\nabla p_t(x_t)\|^2}{p_t(x_t)}\d x_t=\frac{1}{2}\mathbb{E}(-T)
\geq\frac{1}{2}\mathbb{E}(-T_G)=(\d/\d t)H(X_{Gt})$. The theorem is proved.
\end{proof}

\subsection{Proof of Lemma \ref{lm-CS2}}
\label{sec-p4}
In this section, we prove Lemma \ref{lm-CS2} which computes the $m$th-order  log-concave constraints.

A symmetric matrix $\mathcal{M}\in\R^{n\times n}$ is called {\em negative semidefinite}
and is denoted as $\mathcal{M}\preceq0$, if all its eigenvalues are nonpositive.
From \cite{Boyd1}, $p_t$ is log-concave if and only if for all $x\in \R^n$ and $t>0$,
$\mathbf{L}(p_t)$ in \eqref{eq-L1} is negative semidefinite.
By the knowledge of linear algebra,  $\mathbf{L}(p_t)\preceq0$ if and only if
\begin{equation}\label{eq-L2}
(-1)^{k}\triangle_{k,l}\ge 0 \hbox{ for } 1\le k\le n, 1\le l \le {{n}\choose{k}}
\end{equation}
where $\triangle_{k,l}$ is a $k$-order principle minors of $\mathbf{L}(p_t)$.
Note that elements of $\mathbf{L}(p_t)$ are quadratic differential forms in $\R[\P_n]$.
Then $(-1)^k\triangle_{k,l}$ is a $k$th-order log-concave constraint.
As a consequence,
$\prod_{i=1}^s (-1)^{k_i}\triangle_{k_i,l_i} Q_{k_1,\ldots,k_s}$ is an $m$th-order log-concave constraint,
if
$Q_{k_1,\ldots,k_s}\in\span_\R(\M_{2m-2\sum_{i=1}^s k_i,n})$ and
$Q_{k_1,\ldots,k_s}\succeq 0$. This proves Lemma \ref{lm-CS2}.

As an illustrative  example, assume that $m=2$, $n=2$. From \eqref{eq-L1},
$$
\mathbf{L}(p_t)=
\begin{bmatrix}
p_t\frac{\partial^2p_t}{\partial^2x_{1,t}}-(\frac{\partial p_t}{\partial x_{1,t}})^2
& p_t\frac{\partial^2p_t}{\partial x_{1,t}\partial x_{2,t}}-\frac{\partial p_t}{\partial x_{1,t}}\frac{\partial p_t}{\partial x_{2,t}}\\
p_t\frac{\partial^2p_t}{\partial x_{1,t}\partial x_{2,t}}-\frac{\partial p_t}{\partial x_{1,t}}\frac{\partial p_t}{\partial x_{2,t}}
& p_t\frac{\partial^2p_t}{\partial^2x_{2,t}}-(\frac{\partial p_t}{\partial x_{2,t}})^2
\end{bmatrix}.
$$
From \eqref{eq-L2},
$\triangle_{1,1}=p_t\frac{\partial^2p_t}{\partial^2x_{1,t}}-(\frac{\partial p_t}{\partial x_{1,t}})^2$, $\triangle_{1,2}=p_t\frac{\partial^2p_t}{\partial^2x_{2,t}}-(\frac{\partial p_t}{\partial x_{2,t}})^2$, $\triangle_{2,1}=|\mathbf{L}(p_t)|$.
From Lemma \ref{lm-CS2}, the second order log-concave constraints  are

$R_{1,1}=-\triangle_{1,1} Q_{1,1}$, where $Q_{1,1} = q_{1,1,1}(\frac{\partial p_t}{\partial x_{1,t}})^2 + q_{1,1,2} (\frac{\partial p_t}{\partial x_{1,t}})(\frac{\partial p_t}{\partial x_{2,t}}) + q_{1,1,3} (\frac{\partial p_t}{\partial x_{2,t}})^2$ and $Q_{1,1} \ge 0$,

$R_{1,2} = -\triangle_{1,2} Q_{1,2}$, where $Q_{1,2} = q_{1,2,1}(\frac{\partial p_t}{\partial x_{1,t}})^2 + q_{1,2,2} (\frac{\partial p_t}{\partial x_{1,t}})(\frac{\partial p_t}{\partial x_{2,t}}) + q_{1,2,3} (\frac{\partial p_t}{\partial x_{2,t}})^2$ and $Q_{1,2} \ge 0$,

$R_{2,1} = \triangle_{2,1}$,
$R_3 = \triangle_{1,1} \triangle_{1,2}$

\noindent where
$Q_{1,1},Q_{1,2}\in\span_\R(\M_{2,2})$ and
$\M_{2,2}=\{
p_t\frac{\partial^2 p_t}{\partial^2 x_{1,t}},
p_t\frac{\partial^2 p_t}{\partial^2 x_{2,t}},
(\frac{\partial p_t}{\partial x_{1,t}})^2,
\frac{\partial p_t}{\partial x_{1,t}}\frac{\partial p_t}{\partial x_{2,t}},
 (\frac{\partial p_t}{\partial x_{2,t}})^2
\}$.
The monomials $p_t\frac{\partial^2 p_t}{\partial^2 x_{2,t}}$ and
$p_t\frac{\partial^2 p_t}{\partial^2 x_{1,t}}$ do not appear in $Q_{1,1}$ and $Q_{1,2}$
due to the condition  $Q_{1,1}\ge0$ and $Q_{1,2}\ge0$.

\subsection{Procedure \ref{lm-sos1}}
\label{sec-p5}
In this section, we present   Procedure \ref{lm-sos1},
which is a modification of the proof procedure given in~\cite{GYG2020}.

\begin{proc}
\label{proc-H}
{\rm Input}: $E_{s,m,n};$
$R_i,i=1,\ldots,N_1$ are $2m$th-order differential forms;
$P_j$ is a $2k_j$th-order differential form for $j=1,\ldots,N_2$.

{\rm Output}: $e_i\in\R$ and $Q_{j}\in\span_\R(\M_{2(m-k_j),n})$ such that
\eqref{eq-S1} and \eqref{eq-S2} are true; or fail meaning that such $e_i$ and $Q_j$
are not found.
\end{proc}
%
%
{\bf S1}.
Treat the monomials in $\M_{m,n}$ as new variables $m_l,l=1,\ldots,N_{m,n}$,
which are all the monomials in $\R[\P_{n}]$ with degree $m$ and total order $m$.
We call $m_lm_s$ a {\em quadratic monomial}.

{\bf S2}.
Write monomials in $\C_{m,n}=\{R_i,i=1,\ldots,N_1\}$ as quadratic monomials if  possible.
Doing Gaussian elimination to $\C_{m,n}$ by treating the monomials
as variables and according to
a monomial order such that a quadratic monomial is less than a non-quadratic monomial, we obtain
$$\widetilde{\C}_{m,n}={ \C}_{m,n,1}\cup { \C}_{m,n,2},$$
where ${ {\C}}_{m,n,1}$ is the set of quadratic forms in $m_i$,
${{\C}}_{m,n,2}$ is the set of   non-quadratic forms, and $\span_\R(\C_{m,n})=\span_\R(\widetilde{\C}_{m,n})$.

{\bf S3}.
There may exist relations among the variables $m_i$, which are called {\em intrinsic
constraints}.
For instance, for $m_1=p_t^2 (\frac{\partial^2 p_t}{\partial^2 x_{1,t}})^2$,
$m_2=p_t (\frac{\partial p_t}{\partial x_{1,t}})^2 \frac{\partial^2 p_t}{\partial^2 x_{1,t}}$,
and $m_3=(\frac{\partial p_t}{\partial x_{1,t}})^4$ in $\M_{4,n}$,
an intrinsic constraint is $m_1m_3-m_2^2=0$.
Add the intrinsic constraints which are quadratic forms in $m_i$ to
${\C}_{m,n,1}$ to obtain
$$\widehat{\C}_{m,n,1}=\{\widehat{R}_i,i=1,\ldots,N_{3}\}.$$

{\bf S4}.
Let $\M_{2(m-k_j),n}=\{m_{j,k},k=1,\ldots,V_j\}$ and
$Q_j = \sum_{k=1}^{V_j} q_{j,k} m_{j,k}$, where $q_{j,k}$ are variables to be found later.
Let $\RC_j$ be obtained from $P_jQ_j$ by writing monomials in $P_jQ_j$ as quadratic monomials and eliminating the non-quadratic monomials with $\C_{m,n,2}$,
such that $\RC_j - P_jQ_j \in\span_\R(\C_{m,n})$
and $\RC_j=\sum_{l=1}^{V_j} q_{j,l} h_{j,l}$, where $h_{j,l}\in\R[m_i]$ is a quadratic
form.
Delete those $\RC_j$ which are not quadratic forms in $m_i$
and still denote these constraints as $\RC_j,j=1,\ldots,N_2$.

{\bf S5}.
Let  $\widehat{E}_{s,m,n}$ be obtained from $E_{s,m,n}$
by eliminating the non-quadratic monomials using ${\C}_{m,n,2}$ such that
$E_{s,m,n}-\widehat{E}_{s,m,n}\in\span_\R(\C_{m,n,2})\subset \span_\R(\C_{m,n})$.

{\bf S6}. Since
 $\widehat{E}_{s,m,n}$,
 $\widehat{R}_i,i=1,\ldots,N_{3}$ and
 $ {\RC}_j,j=1,\ldots,N_2$
are quadratic forms in $m_i$,
we can use the Matlab program given in Appendix A to compute $p_i,q_{j,s}\in\R$ such that
\begin{eqnarray}
&\widehat{E}_{s,m,n}
 -\sum_{i=1}^{N_{3}} p_i \widehat{R}_i
 -\sum_{j=1}^{N_2}\RC_j =S ,\label{eq-tt31}\\
&\quad\quad \RC_j = \sum_{l=1}^{V_j}q_{j,l}h_{j,l}, j=1,\ldots,N_2\nonumber\\
&Q_{j}= \sum_{l=1}^{V_j} q_{j,l} m_{j,l}\ge0,j=1,\ldots,N_2\label{eq-tt32}
\end{eqnarray}
where $S=\sum_{i=1}^{N_{m,n}} c_i (\sum_{j=i}^{N_{m,n}} e_{ij} m_j)^2$ is an SOS,
$c_i,e_{ij}\in\R$ and $c_i\ge0$.
If \eqref{eq-tt31} and \eqref{eq-tt32} cannot be found, return fail.

{\bf S7}.
Since $\widehat{R}_i$, $E_{s,m,n}-\widehat{E}_{s,m,n}$, ${\RC}_j - P_jQ_j$ are all in
$\span_\R(\C_{m,n})$,
equations  \eqref{eq-S1} and \eqref{eq-S2} can be obtained from \eqref{eq-tt31}
and \eqref{eq-tt32}, respectively.

\begin{remark}
Let $R$ be an  intrinsic constraint. Then $R$ becomes zero, when
replacing $m_i$ by its corresponding monomial in $\M_{m,n}$.
So $\span_\R(\widehat{\C}_{m,n,1}) = \span_\R({\C}_{m,n,1}) \subset\span_\R({\C}_{m,n})$ in $\R[\P_n]$, that is, we do not need to include the intrinsic constraints in \eqref{eq-tt31}.
But these intrinsic constraints are needed when using the
Matlab program in Appendix A.
\end{remark}
%
%

\subsection{An illustrative example}
\label{sec-p6}
As an illustrative example, we prove $C_2(3,1)$ under the the log-concave condition using
the proof procedure given in section \ref{sec-p2}.
%
Since $n=1$, denote
$x_t = x_{1,t}, f:=f_0:=p_t,f_n:=\frac{\partial^n p_t}{\partial^n x_{1,t}},\, n\in\N_{>0}$.

In step 1, By Lemma \ref{lm-pr1} and \eqref{eq-tt1}, we have
\begin{equation*}
\frac{\d^3H(X_t)}{\d^3 t}- \frac{2!}{2}\mathbb{E}[\frac{(f_1^2-ff_2)^3}{f^{6}}]
 \overset{\eqref{H2}}{=}\int\left(\frac{1}{2}\frac{\d^2}{\d^2 t}\frac{f_1^2}{f}-\frac{(f_1^2-ff_2)^3}{f^{5}}\right)dx_t
\overset{\eqref{eq-tt1}}{=}\displaystyle{\int\frac{E_{2,3,1}}{f^{5}} \d x_t}
\end{equation*}
where
$
E_{2,3,1}=\frac{1}{4}f^4f_{3}^2-\frac{1}{2}f^3f_{1}f_{3}f_{2}+\frac{1}{4}f^4f_{1}f_{5}-\frac{11}{4}f^2f_{1}^2f_{2}^2
-\frac{1}{8}f^3f_{1}^2f_{4}+f^3f_{2}^3+3ff_{1}^4f_{2}-f_{1}^6
$
is a $6$th-order differential form.

In step 2, we compute the constraints with Lemmas \ref{lm-CS1} and  \ref{lm-CS2}.
With Lemma \ref{lm-CS1}, we find 6 third order constraints~\cite{GYG2020}: $\C_{3,1}=\{R_{i}, i=1,\ldots,6\}$:
{\small
\begin{equation*}
\begin{array}{ll}
{R}_{1} = 5ff_{1}^4f_{2}-4f_{1}^6,
&{R}_{2} = 2f^3f_{1}f_{2}f_{3}+f^3f_{2}^3-2f^2f_{1}^2f_{2}^2,\\
{R}_{3} = f^4f_{1}f_{5}+f^4f_{2}f_{4}-f^3f_{1}^2f_{4},
&{R}_{4} = f^3f_{1}^2f_{4}+2f^3f_{1}f_{2}f_{3}-2f^2f_{1}^3f_{3},\\
{R}_{5} = f^2f_{1}^3f_{3}+3f^2f_{1}^2f_{2}^2-3ff_{1}^4f_{2},
&{R}_{6} = f^4f_{2}f_{4}+f^4f_{3}^2-f^3f_{1}f_{2}f_{3}.
\end{array}\end{equation*}
With Lemma \ref{lm-CS2}, we have one third order order log-concave  constraint: $\CH_{3,1}=\{P_1Q_1\}$, where $P_1=ff_2-f_1^2$,
$Q_1\in\span_\R(\M_{4,1})$, and $Q_1\ge0$.

In step 3, we use Procedure \ref{proc-H} to compute the SOS representation \eqref{eq-S1}
and \eqref{eq-S2} with input $E_{2,3,1},\C_{3,1}=\{R_i,i=1,\ldots,6\},P_1=f_1^2-ff_2$.

{\bf S1}. The new variables are $\M_{3,1}=\{m_{1} = f^2f_{3}, m_{2} = ff_{1}f_{2}, m_{3} = f_{1}^3\}$,
which  are listed from high to low in the lexicographical monomial order.

{\bf S2}. Writing monomials in $\C_{3,1}$ as quadratic monomials in $m_i$ if  possible and
doing Gaussian elimination to $\C_{3,1}$, we have
{\small
\begin{equation*}
\begin{array}{ll}
{{\C}}_{3,1,1}=\{
 \widehat{R}_1=5m_{2}m_{3}-4m_{3}^2,
 & \widehat{R}_2=m_{1}m_{3}+3m_{2}^2-\frac{12}{5}m_{3}^2\},\\
{{\C}}_{3,1,2}=\{
\widetilde{R}_1=f^3f_2^3+2m_1m_2-2m_2^2,~~~~
&\widetilde{R}_2=f^4f_1f_5-m_1^2+3m_1m_2+6m_2^2-\frac{24}{5}m_3^2,\\~~~~
\hskip25pt \widetilde{R}_3=f^4f_2f_4+m_1^2-m_1m_2,~~~~
&\widetilde{R}_4=f^3f_1^2f_4+2m_1m_2+6m_2^2-\frac{24}{5}m_3^2\}.
\end{array}\end{equation*}}

{\bf S3}. There exist no intrinsic constraints
and thus ${\widehat{\C}}_{3,1,1}=\{\widehat{R}_1,\widehat{R}_2\}$ and $N_3=2$.

{\bf S4}. $\M_{4,1}=\{f^3f_4, f^2f_1f_3, f^2f_2^2, ff_1^2f_2, f_1^4\}$.
Then $Q_1=q_{3}f^2f_2^2+q_{1,2}ff_1^2f_2+q_{1,3}f_1^4$.
Monomials $f^3f_4, f^2f_1f_3$ do not appear in $Q_1$ due to $Q_1\ge0$.
%
%
Writing monomials in $P_1Q_1$ as quadratic monomials if possible and using $\C_{3,1,2}$ to eliminate  non-quadratic monomials, we obtain
 $\mathcal{{R}}_1 = P_1Q_1 -(\frac{1}{5}q_{1,2}\widehat{R}_1-q_{1,1}\widetilde{R}_1-\frac{1}{5}q_{1,3}\widehat{R}_1) = q_{1,1}(2m_{1}m_{2}-m_{2}^2)+q_{1,2}(\frac{4}{5}m_{3}^2-m_{2}^2)+\frac{q_{1,3}}{5}m_{3}^2$.

{\bf S5}. Writing  $E_{2,3,1}$ as a quadratic form in $m_i$, we have
$$\begin{array}{ll}
\widehat{E}_{2,3,1}\!\!\!&=E_{2,3,1}-\frac{3}{5}\widehat{R}_{1}-\widetilde{R}_{1}-\frac{1}{4}\widetilde{R}_2+\frac{1}{8}\widetilde{R}_4
=\frac{1}{2}m_{1}^2-3m_{1}m_{2}-\frac{3}{2}m_{2}^2+2m_{3}^2.
\end{array}$$

{\bf S6}. Since  $\widehat{E}_{3,1}$, $\widehat{R}_1$, $\widehat{R}_2$, $ {\mathcal{R}}_1$ are quadratic forms in $m_i$,
we can use the Matlab program in Appendix A to obtain the following SOS representation
\begin{equation}
\label{eq-cc3}
\begin{array}{ll}
\widehat{E}_{2,3,1}=\sum_{i=1}^{2}p_i\widehat{R}_i+{\mathcal{R}}_1+\sum_{i=1}^3 c_i (\sum_{j=i}^3 e_{i,j} m_j)^2,\ \ \ \
P_1\succeq0,
\end{array}
\end{equation}
where $p_1=\frac{6}{5},\ p_2=-2$,
$c_1=\frac{1}{2}$,
$e_{3}=1, e_{1,2}=-3, e_{1,3}=2$,
$q_{1}=q_{2}=q_{3}=c_2=c_3=0$.

{\bf S7}.
%
Since $q_{1}=q_{2}=q_{3}=0$, the log-concave constraint $\RC_1$ is not needed and we obtain
$${E}_{2,3,1}=\frac{3}{4}R_{1}+R_{2}+\frac{1}{4}R_{3}+\frac{1}{8}R_{4}-\frac{7}{4}R_{5}-\frac{1}{4}R_{6}+\sum_{i=1}^3 c_i (\sum_{j=i}^3 e_{i,j} m_j)^2$$
From Theorem \ref{th-m1}, a proof for $C_{2}(3,1)$ is given based on the above SOS representation.

\section{Proof of $C_2(2,n)$}
\label{sec-epi}
In this section, we prove $C_2(2,n)$ using the procedure given in section \ref{sec-p2}, that is,
\begin{theorem}
Subject to $Var(X_t)=(\sigma^2+t)\times I$, Gaussian $X_t$ with variance $(\sigma^2+t)\times I$ achieves the minimum of $(-1)^{n+1}\frac{d^2}{d^2 t}H(X_t)$ for $t>0$ and $n\geq1$.
\end{theorem}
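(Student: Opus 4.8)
The plan is to run the four-step procedure of Section~\ref{sec-p2} with $s=2$, $m=2$ and $n$ kept as a symbol. By Lemma~\ref{lm-pr1} and the discussion around \eqref{eq-e23}, it suffices to establish $\int_{\R^n}\frac{E_{2,2,n}}{p_t^{3}}\,\d x_t\ge 0$, where $E_{2,2,n}=E_{1,2,n}-\frac{1}{2n}E_{0,2,n}$, with $E_{1,2,n}$ the fourth-order differential form of Lemma~\ref{lm-E1} (so that $\int E_{1,2,n}/p_t^{3}\,\d x_t=-\frac{\d^2}{\d^2 t}H(X_t)$) and $E_{0,2,n}=(\|\nabla p_t\|^2-p_t\nabla^2 p_t)^2$ from Lemma~\ref{lm-E0}. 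For $m=2$ the chain in Lemma~\ref{lm-T2} needs no log-concave hypothesis: inequality $(a)$ is then plain Jensen for $x\mapsto x^2$, and $(b)$ only uses $\mathbb{E}(-T)=J(X_t)\ge J(X_{Gt})=\mathbb{E}(-T_G)>0$, which holds subject to $\mathrm{Var}(X_t)=(\sigma^2+t)I$. Hence $-\frac{\d^2}{\d^2 t}H(X_{Gt})=\frac{1}{2n}[\mathbb{E}(-T_G)]^2\le \frac{1}{2n}\int E_{0,2,n}/p_t^{3}\,\d x_t$, so $\int E_{2,2,n}/p_t^{3}\,\d x_t\ge 0$ indeed implies $C_2(2,n)$; this is why $N_2=0$ in Table~\ref{tab1}.

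Next I would make $E_{1,2,n}$ explicit: differentiate $\frac{\|\nabla p_t\|^2}{p_t}$ once in $t$ using the heat equation \eqref{H1} and \eqref{H2}, then integrate by parts coordinatewise to bring everything over the single denominator $p_t^{3}$, obtaining $E_{1,2,n}=\sum_{a_1=1}^n\sum_{a_2=1}^n E_{1,2,n,(a_1,a_2)}$ (and similarly expand $E_{0,2,n}$). Crucially, each summand of $E_{2,2,n}$ involves only $p_t$ and its partials in the two directions $x_{a_1,t},x_{a_2,t}$, so the whole integrand is invariant under permutations of the $n$ coordinates. Therefore the double sum splits into the two orbits $a_1=a_2$ and $a_1\neq a_2$, and the problem collapses to a fixed finite size independent of $n$ — this is what makes the $6$ variables / $8$ constraints of Table~\ref{tab1} suffice for all $n$ at once.

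The third step is to compute the second-order integral constraints $\C_{2,n}=\{R_i\}$ by the systematic method of Lemma~\ref{lm-CS1} (integration-by-parts identities giving $\int R_i/p_t^{3}\,\d x_t=0$), again reduced to the two orbits. With $\CH_{2,n}=\emptyset$, I would feed $E_{2,2,n}$ and $\{R_i\}$ to Procedure~\ref{lm-sos1} / the SDP of Appendix~A to produce $e_i\in\R$ with $E_{2,2,n}-\sum_i e_i R_i=S$ an SOS in $\R[\P_n]$; Theorem~\ref{th-m1} then yields $C_2(2,n)$. One rechecks the numerically-found $e_i$ and the Gram matrix of $S$ symbolically so that \eqref{eq-S1}–\eqref{eq-S2} hold exactly. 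Finally, that the bound is tight and attained by the Gaussian follows because every inequality in the chain of Lemma~\ref{lm-T2} is an equality when $p_t=\widehat p_t$ (there $T=T_G=-n/(\sigma^2+t)$ is constant), so $\int E_{2,2,n}/p_t^{3}\,\d x_t=0$ for the Gaussian.

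The part I expect to be the real obstacle is \emph{uniformity in $n$}: since the coefficient $\frac{1}{2n}$ makes $E_{2,2,n}$ depend on $n$, the SOS certificate must be valid for every $n\ge 1$ simultaneously. I would handle this by treating $\lambda=1/n\in(0,1]$ as a parameter and verifying that the relevant Gram matrix stays positive semidefinite over the whole interval (or, equivalently, exhibiting $e_i$ and the SOS coefficients as explicit functions of $n$ and checking positivity). A secondary nuisance is the bookkeeping when assembling the global SOS over $\sum_{a_1,a_2}$ from the orbitwise data: one must ensure the square-completion on the off-diagonal orbit $a_1\ne a_2$ does not generate spurious negative cross contributions between distinct index pairs, i.e.\ that the decomposition genuinely respects the double summation.
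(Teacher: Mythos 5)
Your overall architecture matches the paper's: reduce to $\int E_{2,2,n}/p_t^3\,\d x_t\ge 0$ with $E_{2,2,n}=E_{1,2,n}-\tfrac{1}{2n}E_{0,2,n}$, note that for $m=2$ the chain in Lemma~\ref{lm-T2} needs no log-concavity (so $N_2=0$), and then certify nonnegativity by an SOS modulo the second-order integral constraints. But the step you yourself flag as ``the real obstacle'' --- uniformity in $n$ --- is exactly where your plan is incomplete, and your proposed fix (carrying $\lambda=1/n$ as a parameter through the SDP and checking the Gram matrix over an interval) is not how the paper resolves it, nor is it clear it would terminate in a finite certificate: a single SDP produces a certificate for fixed numerical data, and positivity of a parametric Gram matrix over all of $(0,1]$ is an extra semialgebraic problem you have not addressed. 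Your orbit splitting into $a_1=a_2$ and $a_1\ne a_2$ also does not by itself decouple the two orbits, because the negative term $-\tfrac{1}{2n}\sum_{a,b}T_{2,a,b}=-\tfrac{1}{2n}\bigl(\sum_a u_a\bigr)^2$ (with $u_a=(\partial_a p_t)^2-p_t\partial_a^2p_t$) mixes diagonal and off-diagonal contributions.

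The missing idea is the symmetric-function identity \eqref{2.9a}, $\sum_{a,b=1}^n f(a,b)=\sum_{a<b}\bigl\{\tfrac{1}{n-1}[f(a,a)+f(b,b)]+[f(a,b)+f(b,a)]\bigr\}$, combined with the split $\tfrac{1}{n(n-1)}=\tfrac{1}{n-1}-\tfrac{1}{n}$. This rewrites $E_{2,2,n}$ (after adding the constraints $R^{(0)}_{i,a,b}$) as $\sum_{a<b}\bigl(\tfrac{1}{n-1}L_{1,a,b}+\tfrac{1}{2n}L_{2,a,b}+L_{3,a,b}\bigr)$ where $L_{1,a,b},L_{2,a,b},L_{3,a,b}$ are completely independent of $n$; since the three scalar coefficients are nonnegative for all $n\ge 2$, it suffices to show each $L_{i,a,b}$ is an SOS modulo the constraints, which is a single finite computation (the paper finds $\widehat L_{1,a,b}=0$, $\widehat L_{2,a,b}$ reducing to $(u_a-u_b)^2$ plus constraints --- essentially Cauchy--Schwarz --- and $\widehat L_{3,a,b}=(m_3-m_4)^2$). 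This identity is also what disposes of your worry about ``spurious negative cross contributions'' between index pairs: the decomposition is exact, so no reassembly argument is needed. Without this (or an equivalent device), your proof does not close.
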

The log-concave conditions are not needed, so we may set $Q_j=0$
and  compute $e_i\in\R$ such that
${E}_{2,2,n}-\sum_{i=1}^{N_1} e_i R_i =S$ in \eqref{eq-S1}.

\subsection{Compute $E_{2,2,n}$}
In step 1, we compute ${E}_{2,2,n}$
with \eqref{eq-e23}:
{\small \begin{eqnarray}\label{2.7}
-\frac{\d^2 H(X_t)}{\d^2t}-\frac{1}{2n}\mathbb{E}(\frac{\|\nabla p_t\|^2-p_t\nabla^2 p_t}{p_t^2})^2
=\int\frac{E_{2,2,n}}{p_t^3}\d x_t
\end{eqnarray}}
where
\begin{equation}\label{E2n}
\begin{array}{ll}
E_{2,2,n}&=
-\frac{\d}{\d t}(\frac{\|\nabla p_t\|^2}{2p_t})-
 \frac{1}{2n}\frac{(\|\nabla p_t\|^2-p_t\nabla p_t)^2}{p_t^3}\\
&=-\frac{1}{2}p_t^2\nabla p_t\cdot \nabla(\nabla^2p_t)+\frac{1}{4}p_t\|\nabla p_t\|^2\nabla^2p_t-\frac{1}{2n}(\|\nabla p_t\|^2-p_t\nabla^2 p_t)^2\\[0.2cm]
&=\sum\limits_{a=1}^n\sum\limits_{b=1}^{n}(T_{1,a,b}-\frac{1}{2n}T_{2,a,b}),\hbox{ and}\\[0.2cm]
T_{1,a,b}&=-\frac{1}{2}p_t^2\frac{\partial p_t}{\partial x_{a,t}}\frac{\partial^3p_t}{\partial x_{a,t}\partial^2x_{b,t}}+\frac{1}{4}p_t(\frac{\partial p_t}{\partial x_{a,t}})^2\frac{\partial^2p_t}{\partial^2x_{b,t}}\\[0.2cm]
T_{2,a,b}&=((\frac{\partial p_t}{\partial x_{a,t}})^2-p_t\frac{\partial^2p_t}{\partial^2x_{a,t}})((\frac{\partial p_t}{\partial x_{b,t}})^2-p_t\frac{\partial^2p_t}{\partial^2x_{b,t}})
\end{array}\end{equation}

\subsection{The second order constraints}
\label{sec-epi1}

In step 2, we compute the second order integral constraints.
Due to the summation structure of $E_{2,2,n}$ in \eqref{E2n},
we introduce the following notations
\begin{equation}
\label{eq-v2}
{\mathcal V}_{a,b} =\{ \frac{\partial^h p_t}{\partial^{h_1} x_{a,t} \partial^{h_2} x_{b,t}}:
h=h_1+h_2\in[3]_0\}
\end{equation}
where $a,b$ are variables taking values in $[n]$.
Then $\P_{2,n}=\cup_{a=1}^n\cup_{b=1}^n \V_{a,b}$.

The second order integral constraints are~\cite{GYG2020}:
\begin{equation}
\label{eq-2cons}
\C_{2,n}=\{R_{i,a,b}^{(2)},R_{j}^{(0)}  \,:\,i=1,\ldots,17; j=1,2; a,b\in[n]\},
\end{equation}
where  $R_{i,a,b}^{(2)}$  can be found in~\cite{GYG2020},
$R^{(0)}_i = \sum_{a=1}^n\sum_{b=1}^n R^{(0)}_{i,a,b},i=1,2$, and
{\small
\begin{equation}\label{2.6}
\begin{array}{ll}
%
%
%
R^{(0)}_{1,a,b}= p_t^2\dfrac{\partial^3p_t}{\partial x_{a,t}\partial^2x_{b,t}}\dfrac{\partial p_t}{\partial x_{a,t}}
+\dfrac{\partial^2p_t}{\partial^2 x_{a,t}}\left[p_t^2\dfrac{\partial^2p_t}{\partial^2x_{b,t}}
-p_t\left(\dfrac{\partial p_t}{\partial x_{b,t}}\right)^2\right],\\[0.3 cm]
R^{(0)}_{2,a,b}=p_t\dfrac{\partial^2p_t}{x_{a,t}^2}\left(\dfrac{\partial p_t}{\partial x_{b,t}}\right)^2
+2\dfrac{\partial p_t}{\partial x_{a,t}}\left[p_t\dfrac{\partial^2p_t}{\partial x_{a,t}\partial x_{b,t}}\dfrac{\partial p_t}{\partial x_{b,t}}
-\dfrac{\partial p_t}{\partial x_{a,t}}\left(\dfrac{\partial p_t}{\partial x_{b,t}}\right)^2\right].
\end{array}
\end{equation}}

\subsection{Prove $C_2(2,n)$}

In step 3, we use Procedure \ref{proc-H} to prove $C_2(2,n)$
with $E_{2,2,n}$ and $\C_{2,n}$ in \eqref{eq-2cons} as input.
It suffices to write
\begin{equation}
\label{eq-sosepi}
E_{2,2,n} - \sum_{R\in\C_{2,n}} c_R R = S\ge0
\end{equation}
where $c_R\in\R$ and $S$ is an SOS.
From \eqref{eq-sosepi}, a proof for $C_2(2,n)$ can be given based on Theorem \ref{th-m1}.
%
%
%
%
Since $C_2(2,1)$ was proved in \cite{McKean1966,Zhang2018}, we will consider $C_2(2,n),\ n\ge2$.
The general case    cannot be proved directly with Procedure \ref{proc-H},
due to the existence of the parameter $n$.
We will reduce the general case to a ``finite" problem which can be solved
with Procedure \ref{proc-H}.

From \eqref{2.7} and \eqref{eq-2cons}, to prove \eqref{eq-sosepi}, it suffices to solve
\noindent{\bf Problem I}.
There exist $c_1,c_2\in\R$ and an SOS $S$ such that
$$
\begin{array}{l}
\widetilde{E}_{2,2,n}=\sum\limits_{a=1}^{n}\sum\limits_{b=1}^{n}(T_{1,a,b}-\frac{1}{2n}T_{2,a,b} + c_1R^{(0)}_{1,a,b} + c_2R^{(0)}_{2,a,b}) =S
\end{array}
$$
under the constraints $R^{(2)}_{i,a,b},i=1,\ldots,17$ given in \eqref{eq-2cons}.

Motivated by   symmetric functions,  for any function $f(a,b)$, we have
\begin{equation}\begin{array}{ll}
\label{2.9a}
\sum\limits_{a,b=1}^{n}f(a,b)=\sum\limits_{1\leq a<b}^{n}\left\{\frac{1}{n-1}[f(a,a)+f(b,b)]+[f(a,b)+f(b,a)]\right\}.
\end{array}\end{equation}
By \eqref{2.9a}, we have
\begin{equation*}\label{2.12}
\begin{array}{ll}
L_{2,n}&=\sum\limits_{a=1}^{n}\sum\limits_{b=1}^{n}(T_{1,a,b}-\frac{1}{2n}T_{2,a,b}+
   c_1R^{(0)}_{1,a,b}+c_2R^{(0)}_{2,a,b})\\
&=\sum\limits_{a<b}\left[\frac{1}{n-1}(T_{1,a,a}+T_{1,b,b}-\frac{1}{2n}(T_{2,a,a}+T_{2,b,b})
    +c_1(R^{(0)}_{1,a,a}+R^{(0)}_{1,b,b})
    +c_2(R^{(0)}_{2,a,a}+R^{(0)}_{2,b,b}))\right.\\
   &\left.
  + T_{1,a,b}+T_{1,b,a}-\frac{1}{2n}(T_{2,a,b}+T_{2,b,a})+c_1(R^{(0)}_{1,a,b}+R^{(0)}_{1,b,a})
   +c_2(R^{(0)}_{2,a,b}+R^{(0)}_{2,b,a})\right]\\
&=\sum\limits_{a<b}\left\{\frac{1}{n-1}(T_{1,a,a}+T_{1,b,b})-\frac{1}{2n(n-1)}(T_{2,a,a}+T_{2,b,b})
    +\frac{1}{n-1}[c_1(R^{(0)}_{1,a,a}+R^{(0)}_{1,b,b})
    +c_2(R^{(0)}_{2,a,a}+R^{(0)}_{2,b,b})]\right.\\
   &\left.
  + T_{1,a,b}+T_{1,b,a}-\frac{1}{2n}(T_{2,a,b}+T_{2,b,a})+c_1(R^{(0)}_{1,a,b}+R^{(0)}_{1,b,a})
   +c_2(R^{(0)}_{2,a,b}+R^{(0)}_{2,b,a})\right\}\\
&=\sum\limits_{a<b}\left\{\frac{1}{n-1}(T_{1,a,a}+T_{1,b,b})-\frac{1}{2}(\frac{1}{n-1}-\frac{1}{n})(T_{2,a,a}+T_{2,b,b})
    +\frac{1}{n-1}[c_1(R^{(0)}_{1,a,a}+R^{(0)}_{1,b,b})
    +c_2(R^{(0)}_{2,a,a}+R^{(0)}_{2,b,b})]\right.\\
   &\left.
  + T_{1,a,b}+T_{1,b,a}-\frac{1}{2n}(T_{2,a,b}+T_{2,b,a})+c_1(R^{(0)}_{1,a,b}+R^{(0)}_{1,b,a})
   +c_2(R^{(0)}_{2,a,b}+R^{(0)}_{2,b,a})\right\}\\
&=\sum\limits_{a<b}\left\{\frac{1}{n-1}[(T_{1,a,a}+T_{1,b,b})-\frac{1}{2}(T_{2,a,a}+T_{2,b,b})+c_1(R^{(0)}_{1,a,a}+R^{(0)}_{1,b,b})
    +c_2(R^{(0)}_{2,a,a}+R^{(0)}_{2,b,b})]\right.\\[0.2cm]
&\left.+\frac{1}{2n}[(T_{2,a,a}+T_{2,b,b})-(T_{2,a,b}+T_{2,b,a})]
  +[(T_{1,a,b}+T_{1,b,a})+c_1(R^{(0)}_{1,a,b}+R^{(0)}_{1,b,a})
   +c_2(R^{(0)}_{2,a,b}+R^{(0)}_{2,b,a})]\right\}\\
&=\sum\limits_{a<b}\left(\frac{1}{n-1}L_{1,a,b}+\frac{1}{2n}L_{2,a,b}+L_{3,a,b}\right),
\end{array}
\end{equation*}
where
$$\begin{array}{ll}
L_{1,a,b}= (T_{1,a,a}+T_{1,b,b})-\frac{1}{2}(T_{2,a,a}+T_{2,b,b})+c_1(R^{(0)}_{1,a,a}+R^{(0)}_{1,b,b})
    +c_2(R^{(0)}_{2,a,a}+R^{(0)}_{2,b,b}),\\[0.15cm]
L_{2,a,b}=(T_{2,a,a}+T_{2,b,b})-(T_{2,a,b}+T_{2,b,a}),\\[0.15cm]
L_{3,a,b}=(T_{1,a,b}+T_{1,b,a})+c_1(R^{(0)}_{1,a,b}+R^{(0)}_{1,b,a})
   +c_2(R^{(0)}_{2,a,b}+R^{(0)}_{2,b,a}).
\end{array}
$$

To prove {\bf Problem I}, it suffices to prove

\noindent{\bf Problem II}. There exist $c_1, c_2\in\R$ and SOSs $S_1,S_2,S_3$ such that $L_{1,a,b}=S_1,L_{2,a,b}=S_2,L_{3,a,b}=S_2$ under the  constraints $R^{(2)}_{i,a,b},i=1,\ldots,{17}$.

In {\bf Problem II}, the subscripts $a$ and $b$ are fixed
and we can  prove {\bf Problem II} with  Procedure \ref{proc-H}
with $L_{1,a,b},L_{2,a,b},L_{3,a,b}$ and $R^{(2)}_{i,a,b},i=1,\ldots,{17}$ as input.

Step {\bf S1}. The new variables are all the monomials in $\R[{\mathcal V}_{a,b}]$
with degree 2 and total order 2 (${\mathcal V}_{a,b}$ is defined in \eqref{eq-v2}):
$$
\begin{array}{ll}
&
m_1=\left(\frac{\partial p_t(x_t)}{x_{a,t}}\right)^2,\
m_2=\left(\frac{\partial p_t(x_t)}{x_{b,t}}\right)^2,\
m_3=\frac{\partial p_t(x_t)}{\partial x_{a,t}}\frac{\partial p_t(x_t)}{x_{b,t}},\\
&
 m_4=p_t(x_t)\frac{\partial^2p_t(x_t)}{\partial x_{a,t}\partial x_{b,t}},\
m_5=p_t(x_t)\frac{\partial^2p_t(x_t)}{\partial^2 x_{a,t}},\ m_6=p_t(x_t)\frac{\partial^2p_t(x_t)}{\partial^2x_{b,t}}.
\end{array}
$$

Step {\bf S2}. We  obtain
${\C}_{2,n,1}=\{\widehat{R}_i,i=1,\ldots,7\}$
and ${\C}_{2,n,2}=\{\widetilde{R}_i,i=1,\ldots,10\}$ using Gaussian elimination, where
$$\begin{array}{ll}
\widehat{R}_1=m_1m_6-2m_3^2+2m_3m_4,\ &\widehat{R}_2=-2m_2m_3+m_2m_4+2m_3m_6,\\
\widehat{R}_3=-2m_2^2+3m_2m_6,\ &\widehat{R}_4=-2m_1m_3+m_1m_4+2m_3m_5,\\
\widehat{R}_5=m_2m_5-2m_3^2+2m_3m_4,\ &\widehat{R}_6=-2m_2m_3+3m_2m_4,\\
\widehat{R}_7=-2m_1^2+3m_1m_5. &\\
\widetilde{R}_1=p_t^2\frac{\partial p_t}{\partial x_{b,t}}\frac{\partial^3p_t}{\partial^3 x_{b,t}}-m_2m_6+m_6^2
& \widetilde{R}_2=p_t^2\frac{\partial p_t}{\partial x_{a,t}}\frac{\partial^3p_t}
{\partial^3 x_{a,t}}-m_1m_5+m_5^2\\
\widetilde{R}_3=p_t^2\frac{\partial p_t}{\partial x_{a,t}}\frac{\partial^3p_t}{\partial x_{a,t}\partial^2x_{b,t}} -m_3m_4+m_4^2
& \widetilde{R}_4=p_t^2\frac{\partial p_t}{\partial x_{b,t}}\frac{\partial^3p_t}{\partial^2 x_{a,t}x_{b,t}}-m_3m_4+m_4^2.
\end{array}$$
$\widetilde{R}_k,k=5,\ldots,10$ are not given, because they are not used in the proof.

Step {\bf S3}. There exists one intrinsic constraint: $\widehat{R}_{8}=m_1m_2-m_3^2$ and $N_{3}=8$.

We do not need Step {\bf S4}, science there exist no log-concave constraints.

Step {\bf S5}. Eliminating the non-quadratic monomials in $L_{1,a,b}$, $L_{2,a,b}$, and $L_{3,a,b}$
using $\C_{2,n,2}$, and doing further reduction by $\C_{2,n,1}$, we have
\begin{equation*}
\begin{array}{ll}
\widehat{L}_{1,a,b}
&={L}_{1,a,b}+(\frac{1}{2}-c_1)\widetilde{R}_1+(\frac{1}{2}-c_1)\widetilde{R}_2-(\frac{1}{4}+c_2)\widehat{R}_3-(\frac{1}{4}+c_2)\widehat{R}_7=0,\\[0.2cm]
\widehat{L}_{2,a,b}
&={L}_{2,a,b}-2\widehat{R}_1+\frac{1}{2}\widehat{R}_3-2\widehat{R}_5+\frac{1}{2}\widehat{R}_7\\[0.2cm]
&=-\frac{1}{2}m_{1}m_{5}-\frac{1}{2}m_{2}m_{6}+6m_{3}^2-8m_{3}m_{4}+m_{5}^2-2m_{5}m_{6}+m_{6}^2,\\[0.2cm]
\widehat{L}_{3,a,b}
&=L_{3,a,b}+(\frac{1}{2}-c_1)\widetilde{R}_3+(\frac{1}{2}-c_1)\widetilde{R}_4+(c_1-c_2-\frac{1}{4})\widehat{R}_1+(c_1-c_2-\frac{1}{4})\widehat{R}_5\\[0.2cm]
&=m_{3}^2-2m_{3}m_{4}+m_{4}^2+c_{1}(-4m_{3}^2+6m_{3}m_{4}-2m_{4}^2+2m_{5}m_{6})
\end{array}\end{equation*}
which are quadratic forms in $m_i$.

Step {\bf S6}.
Using the Matlab program in Appendix A, we obtain the following SOS representation
\begin{equation}\label{2.16}
\begin{array}{ll}
\widehat{L}_{1,a,b}=0,\ \ \widehat{L}_{2,a,b}=\sum\limits_{k=1}^{8}p_k\widehat{R}_{k}+(m_{1}-m_{2}-m_{5}+m_{6})^2,\ \ \widehat{L}_{3,a,b}=(m_{3}-m_{4})^2,
\end{array}
\end{equation}
where
$p_1=\frac{1}{2},\ p_2= \frac{1}{2},\ p_{3}=2,\ p_6=-2,\ p_7=-2,
c_1=c_2=p_4=p_5=p_8=0.$
%
%
So, {\bf Problem II} is solved and thus $C_2(2,n)$ is proved.

\section{Proof of $C_3(3,n)$ for $n=2,3,4$ under the log-concave condition}
\label{sec-3}
We use the procedure in section \ref{sec-p2} to prove
$C_3(3,n)$ for $n=2,3,4$ under the log-concave condition.

\subsection{Compute $E_{3,3,n}$}
\label{section31}
In step 1, we compute $E_{3,3,n}$ in \eqref{eq-tt1} and \eqref{eq-e23}:
\begin{equation}
\label{3.1}
\frac{1}{2}\dfrac{\d^2}{\d  t^2}(\dfrac{\|\nabla p_t\|^2}{p_t})-\frac{1}{n^3}\mathbb{E}(\frac{\|\nabla p_t\|^2-p_t\nabla^2p_t}{p_t^2})^3
 \overset{\eqref{H1}}
=\int_{\R^n}\frac{E_{3,3,n}}{p_t^5}\d x_t
\end{equation}
where $E_{3,3,n}=\sum_{a=1}^{n}\sum_{b=1}^{n}\sum_{c=1}^{n}E_{3,a,b,c}$ and
{\small
\begin{equation*}\begin{array}{ll}
E_{3,a,b,c}&\!\!\!=\frac{p_t^4}{4}\frac{\partial^3p_t}{\partial x_{a,t}\partial^2x_{c,t}}\frac{\partial^3p_t}{\partial x_{a,t}\partial^2x_{b,t}}
-\frac{p_t^3}{4}\frac{\partial p_t}{\partial x_{a,t}}\frac{\partial^3p_t}{\partial x_{a,t}\partial^2x_{b,t}}\frac{\partial^2p_t}{\partial^2x_{c,t}}
+\frac{p_t^4}{4}\frac{\partial p_t}{\partial x_{a,t}}\frac{\partial^5p_t}{\partial x_{a,t}\partial^2x_{b,t}\partial^2x_{c,t}}\\[0.2cm]
&-\frac{p_t^3}{4}\frac{\partial p_t}{\partial x_{a,t}}\frac{\partial^3p_t}{\partial x_{a,t}\partial^2x_{c,t}}\frac{\partial^2p_t}{\partial^2x_{b,t}}
+\frac{p_t^2}{4}\left(\frac{\partial p_t}{\partial x_{a,t}}\right)^2\frac{\partial^2p_t}{\partial^2x_{b,t}}\frac{\partial^2p_t}{\partial^2x_{c,t}}
-\frac{p_t^3}{8}\left(\frac{\partial p_t}{\partial x_{a,t}}\right)^2\frac{\partial^4p_t}{\partial^2x_{b,t}\partial^2x_{c,t}}\\[0.2cm]
&-\frac{1}{n^3}[(\frac{\partial p_t}{\partial x_{a,t}})^2-p_t(\frac{\partial^2p_t}{\partial^2 x_{a,t}})][(\frac{\partial p_t}{\partial x_{b,t}})^2-p_t(\frac{\partial^2p_t}{\partial^2 x_{b,t}})][(\frac{\partial p_t}{\partial x_{c,t}})^2-p_t(\frac{\partial^2p_t}{\partial^2 x_{c,t}})].
\end{array}\end{equation*}}

\subsection{Compute the third order constraints}
\label{sec-31}
In step 2, we obtain the third order constraints.
Similar to \eqref{eq-v2}, we introduce the notation
\begin{equation}\label{eq-v3}
{\mathcal V}_{a,b,c} =\{
\dfrac{\partial^h p_t}{\partial^{h_1} x_{a,t}\partial^{h_2} x_{b,t}\partial^{h_3} x_{c,t}}:
h=h_1+h_2+h_3 \in\{0,1,\cdots,5\}\}
\end{equation}
where $a,b,c$ are variables taking values in $[n]$.
Then $\P_{3,n}=\cup_{a=1}^n\cup_{b=1}^n\cup_{c=1}^n \V_{a,b,c}$.

The third order integral constraints are~\cite{GYG2020}:
\begin{equation}
\label{eq-3consn}
\C_{3,n}=\{R_{i,a,b,c}^{(3)},\,:\,i=1,\ldots,955; a,b,c\in[n]\},
\end{equation}
where $R_{i,a,b,c}^{(3)}$ can be found in \cite{GYG2020}.
Note that we do not use all the third order constraints in \cite{GYG2020}.

From Lemma \ref{lm-CS2}, we can compute the third order log-concave constraints:
\begin{equation}
\label{eq-consl32}
\mathbb{C}_{3,2}=\{\RC_1=-\triangle_{1,1}Q_{1},\RC_2=-\triangle_{1,2}Q_{2},\RC_3=\triangle_{2,1}Q_{3}\},
\end{equation}
where $Q_1,Q_2\in\span_\R(\M_{4,4})$ and $Q_3\in\span_\R(\M_{2,2})$.
Note that $\mathbb{C}_{3,2}$ does not contain all the log-concave constraints in
Lemma \ref{lm-CS2}. The constraints $\mathbb{C}_{3,2}$ are enough for our purpose in this paper.

For $n>2$, we give certain log-concave constraints in a special form, which are needed in the proof procedure in section 4.3.
Let $\nabla_1 p_t =(\frac{\partial p_t}{\partial x_{a,t}},\frac{\partial p_t}{\partial x_{b,t}},\frac{\partial p_t}{\partial x_{c,t}})$,
$\mathbf{L}_1(p_t)\triangleq p_t{\bf H}_1(p_t)-\nabla_1^T p_t\nabla_1 p_t$,
where
\begin{equation*}
{\bf H}_1(p_t)=\left[ \begin{array}{ccc}
\frac{\partial^2p_t}{\partial^2x_{a,t}} & \frac{\partial^2p_t}{\partial x_{a,t}\partial x_{b,t}} & \frac{\partial^2p_t}{\partial x_{a,t}\partial x_{c,t}}\\
\frac{\partial^2p_t}{\partial x_{a,t}\partial x_{b,t}} & \frac{\partial^2p_t}{\partial^2x_{b,t}} & \frac{\partial^2p_t}{\partial x_{b,t}\partial x_{c,t}}\\
\frac{\partial^2p_t}{\partial x_{a,t}\partial x_{c,t}} & \frac{\partial^2p_t}{\partial x_{b,t}\partial x_{c,t}} & \frac{\partial^2p_t}{\partial^2x_{c,t}}
\end{array}
\right ],
\end{equation*}
and $\triangle'_{k,l},l=1,\ldots,L_k$  the $k$th-order principle minors of $\mathbf{L}_1(p_t)$.
Let $\M'_{k}$ be the set of all monomials in ${\mathcal V}_{a,b,c}$ (defined in \eqref{eq-v3}) which have degree $k$ and total order $k$.
We have
\begin{equation}
\label{eq-consl33}
\mathbb{C}_{3,n}=\{-\triangle'_{1,1}Q_{1,1},
 -\triangle'_{1,2}Q_{1,2},-\triangle'_{1,3}Q_{1,3},
\triangle'_{2,1}Q_{2,1},\triangle'_{2,2}Q_{2,2},\triangle'_{2,3}Q_{2,3},-\triangle'_{3,1}Q_{3,1}\}
\end{equation}
where $Q_{1,i}\in\span_\R(\M'_{4})$, $Q_{2,j}\in\span_\R(\M'_{2})$,
and $Q_{3,1}\in\R$.

\subsection{Proof of $C_3(3,2)$}
\label{section3.1}

The proof follows Procedure \ref{proc-H} with  $E_{3,3,2}$ given in \eqref{3.1}
and the constraints in \eqref{eq-3consn} and \eqref{eq-consl32}
as input.

In Step {\bf S1},
 the new variables are $\M_{3,2}$ and are listed  in the lexicographical monomial order:
$$\begin{array}{ll}
m_1=p_t^2\frac{\partial p_t^3}{\partial^3 x_{2,t}},
m_2=p_t^2\frac{\partial^3 p_t}{\partial x_{1,t}\partial^2 x_{2,t}},
m_3=p_t^2\frac{\partial^3 p_t}{\partial^2 x_{1,t}\partial x_{2,t}},
m_4=p_t^2\frac{\partial p_t^3}{\partial^3 x_{1,t}},\\
m_5=p_t\frac{\partial^2 p_t}{\partial^2 x_{2,t}}\frac{\partial p_t}{\partial x_{2,t}},
m_6=p_t\frac{\partial^2 p_t}{\partial^2 x_{2,t}}\frac{\partial p_t}{\partial x_{1,t}},
m_7=p_t\frac{\partial^2 p_t}{\partial x_{1,t}\partial x_{2,t}}\frac{\partial p_t}{\partial x_{2,t}},\\
m_8=p_t\frac{\partial^2 p_t}{\partial x_{1,t}\partial x_{2,t}}\frac{\partial p_t}{\partial x_{1,t}},
m_9=p_t\frac{\partial^2 p_t}{\partial x_{1,t}^2}\frac{\partial p_t}{\partial x_{2,t}},
m_{10}=p_t\frac{\partial^2 p_t}{\partial x_{1,t}^2}\frac{\partial p_t}{\partial x_{1,t}},\\
m_{11}=\left(\frac{\partial p_t}{\partial x_{2,t}}\right)^3,
m_{12}=\left(\frac{\partial p_t}{\partial x_{2,t}}\right)^2\frac{\partial p_t}{\partial x_{1,t}},
m_{13}=\frac{\partial p_t}{\partial x_{2,t}}\left(\frac{\partial p_t}{\partial x_{1,t}}\right)^2,
m_{14}=\left(\frac{\partial p_t}{\partial x_{1,t}}\right)^3.
\end{array}$$

In Step {\bf S2},
the constraints are $\C_{3,2}=\{R^{(3)}_{j,a,b,c}\,:\,j=1,\ldots,955;a,b,c\in[2]\}$.
%
Removing the repeated ones, we have $N_1=135$.
We  obtain $\C_{3,2,1}$ and $\C_{3,2,2}$ which contain
48 and 52 constraints, respectively.

In Step {\bf S3}, there exist 15 intrinsic constraints:
{\small
\begin{equation*}
\begin{array}{ll}
 m_{5}m_{8}=m_{6}m_{7}, m_{5}m_{10}=m_{6}m_{9}, m_{5}m_{12}=m_{6}m_{11}, m_{5}m_{13}=m_{6}m_{12}, m_{5}m_{14}=m_{6}m_{13},\\
  m_{7}m_{10}=m_{8}m_{9},  m_{7}m_{12}=m_{8}m_{11}, m_{7}m_{13}=m_{8}m_{12}, m_{7}m_{14}=m_{8}m_{13}, m_{9}m_{12}=m_{10}m_{11},\\
   m_{9}m_{13}=m_{10}m_{12},
 m_{9}m_{14}=m_{10}m_{13}, m_{11}m_{13}=m_{12}^2, m_{11}m_{14}=m_{12}m_{13}, m_{12}m_{14}=m_{13}^2.
\end{array}
\end{equation*}}
Thus,  ${\widehat{\C}}_{3,2,1}$ contains 63 constraints and $N_3 = 63$.

In Step {\bf S4}, we obtain $\widehat{\mathbb{C}}(3,2)$ which contains 3 quadratic form constraints.

In Step {\bf S5}, eliminating the non-quadratic monomials in $E_{3,3,2}$
using $\C_{3,2,2}$ to obtain a quadratic form in $m_i$ and then simplifying the quadratic form using $\C_{3,2,1}$, we have
{\small
\begin{equation*}
\begin{array}{ll}
\widehat{E}_{3,3,2}\!\!\!\!
&=-\frac{147}{8}m_{13}^2+\frac{31}{40}m_{14}^2-\frac{5}{2}m_{7}m_{10}+\frac{15}{4}m_{8}^2-\frac{25}{8}m_{9}^2
-\frac{31}{16}m_{9}m_{11}+\frac{207}{8}m_{9}m_{13}-\frac{5}{8}m_{10}^2+\frac{1}{2}m_{1}^2\\
&-\frac{5}{4}m_{1}m_{5}+\frac{31}{40}m_{11}^2+\frac{31}{8}m_{12}^2+\frac{1}{2}m_{4}^2-\frac{5}{2}m_{4}m_{6}-\frac{5}{4}m_{4}m_{7}
+\frac{3}{2}m_{3}^2-\frac{15}{4}m_{7}^2-\frac{5}{4}m_{4}m_{10}\\
& -\frac{5}{8}m_{5}^2+\frac{15}{8}m_{6}^2+\frac{3}{2}m_{2}^2-\frac{15}{4}m_{2}m_{6}.
\end{array}
\end{equation*}}

In Step {\bf S6}, using the Matlab program in Appendix A
with $\widehat{E}_{3,3,2}$, $\widehat{\C}_{3,2,1}$ and $\widehat{\mathbb{C}}_{3,2}$ as input,
we find an SOS representation for $\widehat{E}_{3,3,2}$. Thus, $C_3(3,2)$ is proved under the log-concave condition.
The Maple  program to prove $C_3(3,2)$ can be found in
https://github.com/cmyuanmmrc/codeforepi/.

\begin{remark}
We fail to prove $C_2(3,2)$ even under the log-concave condition similar to the above procedure. Specifically, we cannot find an SOS representation for $\widehat{E}_{2,3,2}$ in Step {\bf S6}.
Since the SDP algorithm is not complete for  problem \eqref{eq-tt31}, we cannot
say that an SOS representation does not exist for $\widehat{E}_{2,3,2}$.
The Maple program for $C_2(3,2)$ can be found in
https://github.com/cmyuanmmrc/codeforepi/.
\end{remark}

\subsection{Proof of $C_3(3,3)$ and $C_3(3,4)$}
In this subsection, we want to prove $C_3(3,3),C_3(3,4)$. Motivated by symmetric functions, for any function $f(a,b,c)$, we have
\begin{equation}\begin{array}{ll}
\label{2.9ab}
\sum\limits_{a,b,c=1}^{n}f(a,b,c)=\sum\limits_{1\leq a<b<c}^{n}\{\frac{2}{(n-1)(n-2)}[f(a,a,a)+f(b,b,b)+f(c,c,c)]
+\frac{1}{n-2}[f(a,a,b)+f(a,b,a)\\
\ \ \ \ \ \ \ +f(b,a,a)+f(a,a,c)+f(a,c,a)+f(c,a,a)+f(b,b,a)+f(b,a,b)+f(a,b,b)+f(b,b,c)\\
\ \ \ \ \ \ \ +f(b,c,b)+f(c,b,b)+f(c,c,a)+f(c,a,c)+f(a,c,c)+f(c,c,b)+f(c,b,c)+f(b,c,c)]\\
\ \ \ \ \ \ \ +[f(a,b,c)+f(a,c,b)+f(b,a,c)+f(b,c,a)+f(c,a,b)+f(c,b,a)]\}.
\end{array}\end{equation}
From \eqref{3.1} and \eqref{2.9ab}, we obtain
$$\begin{array}{ll}
E_{3,3,n}=\sum_{a=1}^{n}\sum_{b=1}^{n}\sum_{c=1}^{n}E_{3,a,b,c}=\sum\limits_{1\leq a<b<c}^{n}J_{3,3,n},
\end{array}$$
where
\begin{equation}\label{J3n}
\begin{array}{ll}
J_{3,3,n}&=\frac{2}{(n-1)(n-2)}[E_{3,a,a,a}+E_{3,b,b,b}+E_{3,c,c,c}]
+\frac{1}{n-2}[E_{3,a,a,b}+E_{3,a,b,a}+E_{3,b,a,a}+E_{3,a,a,c}\\
&+E_{3,a,c,a}+E_{3,c,a,a}+E_{3,b,b,a}+E_{3,b,a,b}+E_{3,a,b,b}+E_{3,b,b,c}+E_{3,b,c,b}
+E_{3,c,b,b}+E_{3,c,c,a}\\
&+E_{3,c,a,c}+E_{3,a,c,c}+E_{3,c,c,b}+E_{3,c,b,c}+E_{3,b,c,c}]+[E_{3,a,b,c}+E_{3,a,c,b}\\
&+E_{3,b,a,c}+E_{3,b,c,a}+E_{3,c,a,b}+E_{3,c,b,a}]
\end{array}\end{equation}
Thus, if we prove $J_{3,3,n}\geq0$, then $E_{3,3,n}\geq0$.
It is clear that $J_{3,3,n}$ contains much smaller terms than $E_{3,3,n}$.

In  $J_{3,3,n}$ given in \eqref{J3n}
and the constraints in \eqref{eq-3consn} and \eqref{eq-consl33},
we may consider
$\frac{\partial}{\partial x_{a,t}}$, $\frac{\partial}{\partial x_{b,t}}$,
and $\frac{\partial}{\partial x_{c,t}}$ as the differential operators
without giving concrete values to $a,b,c$.
%

First, we prove of $C_3(3,3)$ using Procedure \ref{proc-H} with $J_{3,3,3}$ given in \eqref{J3n}
and the constraints in \eqref{eq-3consn} and \eqref{eq-consl33} as the input.
%
%

In Step {\bf S1}, the new variables are $\M'_3=\{m_i,i=1,\ldots,38\}$.

In Step {\bf S2}, the constraints are:
$\C_{3,n}=\{R_{i,a,b,c}^{(3)}\,:\,i=1,\ldots,955\}$, $N_1=955$. We obtain   $\C_{3,n,1}$ and $\C_{3,n,2}$, which contain
350 and 328 constraints, respectively.

In Step {\bf S3}, there exist 189 intrinsic constraints.
In total, ${\widehat{\C}}_{3,n,1}$ contains 539 constraints.
Using $\R$-Gaussian elimination in $\span_\R({\widehat{\C}}_{3,n,1})$
shows that 512 of these 539 constraints are linearly independent,  so $N_3 = 512$.

In Step {\bf S4}, we obtain $\widehat{\mathbb{C}}_{3,n}$ from ${\mathbb{C}}_{3,n}$ which contains 6 constraints.

In Step {\bf S5},
eliminating the non-quadratic monomials in $J_{3,3,3}$
using $\C_{3,n,2}$ and then simplify the expression using $\C_{3,n,1}$, we have
{\small
$$\begin{array}{ll}
\widehat{J}_{3,3,3}\!\!\!\!&=\frac{31}{9}m_{23}^2+\frac{29}{18}m_{22}^2+\frac{88}{135}m_{29}^2-\frac{29}{54}m_{21}^2-\frac{178}{9}m_{23}m_{33}
+\frac{88}{9}m_{20}m_{34}+\frac{202}{9}m_{20}m_{32}-\frac{145}{54}m_{20}^2-\frac{29}{54}m_{28}^2\\
&+\frac{176}{9}m_{33}^2+\frac{88}{135}m_{38}^2+\frac{88}{27}m_{36}^2+\frac{88}{27}m_{30}^2-\frac{145}{54}m_{27}^2+\frac{88}{135}m_{35}^2+\frac{29}{9}m_{25}^2
+\frac{3}{2}m_{4}^2-\frac{44}{27}m_{20}m_{29}\\
&-\frac{2}{9}m_{20}m_{26}+\frac{202}{9}m_{27}m_{37}-\frac{44}{27}m_{27}m_{35}-\frac{29}{27}m_{1}m_{11}+\frac{3}{2}m_{6}^2+\frac{29}{9}m_{19}^2+\frac{3}{2}m_{8}^2+\frac{3}{2}m_{9}^2
-\frac{29}{54}m_{11}^2+\frac{29}{18}m_{12}^2\\
&-\frac{29}{3}m_{18}^2+\frac{29}{18}m_{13}^2-\frac{58}{9}m_{6}m_{20}-\frac{29}{27}m_{7}m_{14}-\frac{58}{27}m_{7}m_{12}-\frac{29}{27}m_{7}m_{21}-\frac{58}{9}m_{8}m_{13}\\
&+\frac{58}{9}m_{9}m_{12}
-\frac{29}{9}m_{8}m_{22}-\frac{29}{27}m_{10}m_{17}-\frac{58}{27}m_{10}m_{13}-\frac{29}{27}m_{10}m_{24}-\frac{29}{27}m_{10}m_{28}-\frac{58}{27}m_{10}m_{22}\\
&-\frac{86}{9}m_{26}m_{32}+\frac{202}{9}m_{26}m_{34}
+\frac{88}{27}m_{31}^2-\frac{29}{9}m_{24}^2-\frac{29}{9}m_{14}^2+\frac{29}{9}m_{15}^2+\frac{29}{9}m_{16}^2-\frac{29}{9}m_{17}^2-\frac{145}{54}m_{26}^2\\
&+\frac{3}{2}m_{3}^2-\frac{58}{27}m_{24}m_{28} -\frac{58}{27}m_{14}m_{21}-\frac{58}{27}m_{17}m_{28}-\frac{29}{9}m_{3}m_{13}-\frac{44}{27}m_{26}m_{29}
-\frac{29}{9}m_{2}m_{12}\\
&-16m_{37}^2-16m_{34}^2+3m_{5}^2+\frac{1}{2}m_{7}^2+\frac{1}{2}m_{10}^2-16m_{32}^2+\frac{1}{2}m_{1}^2.
\end{array}$$}

In Step {\bf S6}, using the Matlab program in Appendix A
with $\widehat{J}_{3,3,3}$, $\widehat{\C}_{3,n,1}$ and $\widehat{\mathbb{C}}_{3,n}$ as input, we find an SOS representation for $\widehat{J}_{3,3,3}$.
Thus, $C_3(3,3)$ is proved.
The Maple  program to prove $C_3(3,3)$ can be found in
https://github.com/cmyuanmmrc/codeforepi/.

To prove $C_3(3,4)$, we just need to replace the input from $J_{3,3,3}$ to $J_{3,3,4}$ in the Step {\bf S5} in the above procedure.
In the same way, $C_3(3,4)$ can be proved.
The Maple  program to prove $C_3(3,4)$ can be found in
https://github.com/cmyuanmmrc/codeforepi/.

\section{Proof of $C_3(4,2)$}
\label{sec-42}

We use the procedure in section \ref{sec-p2} to prove
$C_3(4,2)$ under the log-concave condition.

In step 1, we compute $E_{3,4,n}$ in \eqref{eq-tt1} and \eqref{eq-e23}:
\begin{equation}
\label{4.1}
\frac{1}{2}\dfrac{\d^3}{\d  t^3}(\dfrac{\|\nabla p_t\|^2}{p_t})-\frac{3}{n^4}\mathbb{E}(\frac{\|\nabla p_t\|^2-p_t\nabla^2p_t}{p_t^2})^4
 \overset{\eqref{H1}}
=\int_{\R^n}\frac{E_{3,4,n}}{p_t^7}\d x_t
\end{equation}
where $E_{3,4,n}=\sum_{a=1}^{n}\sum_{b=1}^{n}\sum_{c=1}^{n}\sum_{d=1}^{n}E_{4,a,b,c,d}$. For brevity, we omit the concrete expression of $E_{4,a,b,c,d}$.

In step 2,
based on Lemma \ref{lm-CS1}, we obtain 589 fourth order constraints
\begin{equation}\label{42cons}
\begin{array}{ll}
\C_{4,2}=\{R^{(2)}_{i,1,2}\,:\,i=1,\ldots,589\}\subset\R[\P_{2}]\hbox{ and } N_1=589.
\end{array}
\end{equation}
%
By  Lemma \ref{lm-CS2}, we obtain three $4$th-order log-concave constraints: $$\mathbb{C}_{4,2}=\{-\triangle_{1,1}Q_{1,1},-\triangle_{1,2}Q_{1,2},\triangle_{2,1}Q_{2,1}\}$$
where $Q_{1,1},Q_{1,2}\in\span_\R(\M_{6,2})$ and $Q_{2,1}\in\span_\R(\M_{4,2})$.

In step 3, we use Procedure \ref{proc-H} to compute the SOS representation \eqref{eq-S1}
and \eqref{eq-S2} with $E_{3,4,n}$, $\C_{4,2}$, and $\mathbb{C}$ s the input.

%

In Step {\bf S1}, the new variables are $\M_{4,2}=\{m_i,i=1,\ldots,33\}$.

In Step {\bf S2}, using Gaussian elimination to $\C_{4,2}=\{R^{(2)}_{i,1,2}\,:\,i=1,\ldots,589\}$,
we obtain $\C_{4,2,1}$ and $\C_{4,2,2}$ which contain 266 and 182 constraints, respectively.

In Step {\bf S3}, there exist 182 intrinsic constraints. Thus,  ${\widehat{\C}}_{4,2,1}$ contains 448 constraints.
Using $\R$-Gaussian elimination in $\span_\R({\widehat{\C}}_{4,2,1})$
shows that 417 of these 448 constraints are linearly independent, so $N_3 = 417$.

In Step {\bf S4}, we obtains $\widehat{\mathbb{C}}(4,2)$ which contain 3 log-concave constraints,
so $N_2=3$.

In Step {\bf S5}, eliminating the non-quadratic monomials in $E_{3,4,2}$
using $\C_{4,2,2}$ to obtain a quadratic form in $m_i$ and then simplifying the quadratic form using $\C_{4,2,1}$, we have
{\small
\begin{equation*}
\begin{array}{ll}
\widehat{E}_{3,4,2}\!\!\!\!
&=-{4}m_{2}m_{7}+{3}m_{3}^2+\frac{135}{8}m_{26}^2-{3}m_{3}m_{14}+\frac{3}{4}m_{14}m_{19}-{156}m_{32}^2+{6}m_{21}^2-{36}m_{21}m_{27}\\
&+{12}m_{20}m_{26}+{2}m_{20}^2+\frac{135}{4}m_{18}^2+{96}m_{15}m_{32}-{24}m_{15}m_{24}-{72}m_{15}m_{27}+{8}m_{15}m_{21}+{6}m_{15}^2\\
&-{36}m_{14}m_{23}-{8}m_{13}m_{20}-{2}m_{13}^2-{4}m_{13}m_{14}-{10}m_{12}^2+{8}m_{12}m_{15}-{6}m_{11}^2-{6}m_{10}^2+{16}m_{10}m_{13}\\
&+{12}m_{10}m_{14}+{6}m_{9}^2-{6}m_{8}^2-{8}m_{8}m_{11}+{2}m_{7}^2-{2}m_{6}^2-m_{5}m_{25}+{4}m_{5}m_{20}-{8}m_{5}m_{10}-m_{5}m_{14}\\
&-{6}m_{5}m_{16}+\frac{1}{2}m_{5}^2-{4}m_{4}m_{15}-{4}m_{4}m_{7}+{2}m_{4}^2+\frac{31}{48}m_{14}^2-\frac{249}{8}m_{19}^2
+\frac{5}{4}m_{14}m_{16}+\frac{135}{4}m_{22}^2\\
&-\frac{669}{112}m_{33}^2-\frac{37}{4}m_{25}m_{28}+\frac{31}{48}m_{25}^2+\frac{9}{4}m_{16}m_{28}-\frac{669}{112}m_{29}^2-m_{1}m_{14}
 +\frac{1}{2}m_{1}^2+\frac{39}{2}m_{23}^2-\frac{33}{8}m_{16}^2\\
&+\frac{135}{4}m_{24}^2+{2}m_{2}^2+\frac{135}{8}m_{17}^2+\frac{135}{4}m_{27}^2+\frac{135}{8}m_{28}^2-\frac{327}{10}m_{30}^2-\frac{639}{8}m_{31}^2
+\frac{3}{4}m_{25}m_{26}+\frac{411}{8}m_{26}m_{31}\\
&+\frac{1233}{8}m_{26}m_{33}+\frac{153}{4}m_{14}m_{28}-\frac{63}{4}m_{14}m_{26}+\frac{411}{40}m_{26}m_{29}-\frac{19}{4}m_{16}m_{25}-\frac{63}{4}m_{16}m_{26}-\frac{37}{4}m_{14}m_{17}.
\end{array}
\end{equation*}}

In Step {\bf S6}, using the Matlab program in Appendix A
with $\widehat{E}_{3,4,2}$, $\widehat{\C}_{4,2,1}$ and $\widehat{\mathbb{C}}(4,2)$ as input,
we find an SOS representation for $\widehat{E}_{3,4,2}$. Thus, $C_3(4,2)$ is proved under the log-concave condition.
The Maple  program to prove $C_3(4,2)$ can be found in
https://github.com/cmyuanmmrc/codeforepi/.

\section{Conclusion}
\label{sec-conc}

In this paper, two conjectures concerning the lower bounds
for the derivatives of $H(X_t)$ are considered.
We first consider a conjecture of McKean
$C_2(m,n): (-1)^{m+1}(\d^m/\d^m t)H(X_t)\ge(-1)^{m+1}(\d^m/\d^m t)H(X_{Gt})$
in the multivariate case.
We propose a general procedure to prove inequities similar to $C_2(m,n)$.
Using the procedure, we  prove $C_2(1,n)$ and $C_2(2,n)$.
We notice that $C_2(m,n)$ cannot be proved for $m>2$ and $n>1$ with the procedure
even under the log-concave condition, which motivates us to propose the following weaker conjecture
$C_3(m,n): (-1)^{m+1}(\d^m/\d^m t)H(X_t)\ge(-1)^{m+1}\frac{1}{n}(\d^m/\d^m t)H(X_{Gt})$. 
%
%
%
Using our procedure, we prove   $C_3(3,2),C_3(3,3),C_3(3,4)$ and $C_3(4,2)$ under the log-concave condition.

From $C_2(1,n)$ and $C_2(2,n)$ proved in this paper, the exact lower bounds for
$(-1)^{m+1}(\d^m/\d^m t)H(X_t)$ are  $(-1)^{m+1}(\d^m/\d^m t)H(X_{Gt})$
for $m=1$ and $2$, respectively. The high order cases are widely open and we give a brief summary of the known results below.

First consider the univariate case ($n=1$). 
$C_1(3,1)$ and $C_1(4,1)$ were true~\cite{Cheng2015} 
and $C_1(5,1)$ cannot be proved with the SDP approach\footnote{In this paper, when we say $C_s(m,n)$  cannot be proved with the SDP approach, we mean that the software in Appendix A terminates and gives a negative answer for problem \eqref{eq-tt31}.}~\cite{Zhang2018,GYG2020}.
$C_2(3,1)$, $C_2(4,1)$, and $C_2(5,1)$ were true under the log-concave condition~\cite{Zhang2018}.
$C_2(6,1)$ is considered in this paper. The  software in Appendix A shows 
that $E_{2,6,1}\ge0$ under the log-concave condition. However, due to the accuracy of the SDP solver, we cannot find an explicit SOS representation. So if the SDP software is correct, 
$C_2(6,1)$ is proved under the log-concave condition.
From these results, a reasonable target is to prove $C_1(m,1)$ or $C_2(m,1)$ under the log-concave condition.

For the multivariate case, 
$C_1(3,2)$, $C_1(3,3)$, $C_1(3,4)$ were true
and $C_1(4,2)$ cannot be proved with the SDP approach~\cite{GYG2020}.
In this paper, $C_3(3,2)$, $C_3(3,3)$, $C_3(3,4)$, and $C_3(4,2)$ were proved under the log-concave condition,  and $C_2(3,2)$, $C_2(3,3)$, $C_2(3,4)$, and $C_2(4,2)$ cannot be proved with the SDP approach under the log-concave condition.
From these results, a guess for the lower bound is  $(-1)^{m+1}(\d^m/\d^m t)H(X_t)\ge(-1)^{m+1}A(n)(\d^m/\d^m t)H(X_{Gt})$, where $A(n)$ is a function in $n$ such that $0\le A(n)\le 1$. 

In order to use the SDP approach to prove more difficult problems such as $C_1(3,n)(n>4)$
and $C_3(3,n)(n>4)$ under the log-concave condition, two kinds of improvements
are needed.
First, it is easy to see that the size of $E_s(m,n)$ and the numbers of the constraints
increase exponentially as $m$ and $n$ becomes larger.
Thus, we need to find certain  rules which could be used to simplify the computation.
Second, in many cases, such as $C_2(3,2)$ under the log-concave constraint,
the SDP program terminates and gives a negative answer.
Since the SDP method is not complete for our problem, we do not know whether
an SOS representation exists.
We thus need a complete method to solve problem \eqref{eq-S1}.
Another problem is to find more constraints besides those used in this paper
in order to increase the power of the approach.
%
%

\section*{Acknowledgments}

This work is partially supported by NSFC 11688101 and NKRDP 2018YFA0704705, Beijing Natural Science Foundation (No. Z190004),
and China Postdoctoral Science Foundation (No. 2019TQ0343, 2019M660830).
%

\section*{Appendix A. Sum of square of quadratic forms based on SDP}
\label{app-b}
%
%
%

We first restate the problem.
Let $f=\widehat{E}_{s,m,n}$,
 $g_i=\widehat{R}_i,i=1,\ldots,N_{3}$,
 ${\RC}_j=\sum_{l=1}^{V_j} q_{j,l} h_{j,l},j=1,\ldots,N_2$,
 where $f,g_i,h_{j,s}$ are quadratic forms in $\R[\M_{m,n}]$
 and $q_{j,s}$ are variables to be determined.
For simplicity, let $\textbf{x} = \M_{m,n}=\{x_1,\ldots,x_u\}$, $U=N_3,V=N_2$.
$Q_j = \sum_{l=1}^{V_j} q_{j,l} m_{j,l}$, where $q_{j,l}$ are variables to be found
and $m_{j,s}\in\M_{2d_j,n}$ are monomials in $\R[\P_n]$ of degree $2d_j$
and total order $2d_j$.
We need to compute $p_i,q_{j,l}\in\R$ such that
\begin{eqnarray}
&&f
 -\sum_{i=1}^{U} p_i g_i
 -\sum_{j=1}^{V}\sum_{l=1}^{V_j}q_{j,l}h_{j,l}  =S ,\label{eq-A1}\\
&&Q_{j}= \sum_{l=1}^{V_j} q_{j,l} m_{j,l}\ge0,j=1,\ldots,V\label{eq-A2}
\end{eqnarray}
where $S=\sum_{i=1}^{u} c_i (\sum_{j=i}^{u} e_{ij} x_j)^2$ is an SOS,
$c_i,e_{ij}\in\R$ and $c_i\ge0$.

We first reduce $Q_j$ into quadratic forms.
Let $\textbf{y}_j= \M_{d_j,n} = \{y_{j,1},\ldots,y_{j,w_j}\}$.
Write the monomials of $Q_j$ as quadratic monomials in $y_{j,s}$,
and still denote results as $Q_j$. If $Q_j$ is not a quadratic form
in $y_{j,s}$, then just set the coefficients of those non-quadratic monomials to zero.
Then, constraint \eqref{eq-A2} becomes
\begin{eqnarray}\label{eq-A3}
Q_{j}= \sum_{l=1}^{V_j} q_{j,l} t_{j,l}\ge0,j=1,\ldots,N_2
\end{eqnarray}
where each $t_{j,l}$ is a quadratic monomial in $y_{j,k}$.

A polynomial $f$ in $\R[\textbf{x}]$ is called {\em positive semidefinite}
and is denoted as $f\succeq0$,
if $\forall \tilde{\textbf{x}}\in \R^u, f(\tilde{\textbf{x}})\ge0$.
%

{\noindent\bf Lemma B}.\cite{GYG2020}
Let $f\in \Q[\textbf{x}]$  be a quadratic form. Then $f\succeq0$ if and only if
\begin{equation}
\label{2.1t}
f=\sum_{i=1}^u c_i (\sum_{j=i}^u e_{i,j} x_j)^2,
\end{equation}
where $c_i,e_{i,j}\in\Q$, $c_i\ge0$, and $e_{i,i}\ne0$ if $c_i\ne0$,  for $i=1,\ldots,u$ and $j=i,\ldots,u$.

Based on Lemma B, problem \eqref{eq-A1}  is equivalent to the following problem.
\begin{equation}
\label{eq-A4}
\begin{array}{l}
\exists p_i,\ q_{j,l}\in\R,\, {\rm s.t.}\\[0.2cm]
 f-\sum_{i=1}^{U} p_i g_i -\sum_{j=1}^{V}\sum_{l=1}^{V_j}q_{j,l}h_{j,l}\succeq0,\\
Q_{j}= \sum_{l=1}^{V_j} q_{j,l} t_{j,l}\succeq0,j=1,\ldots,N_2
\end{array}
\end{equation}
Problem (\ref{eq-A4}) can be solved with semidivine programming (SDP).
For details of SDP, please refer to~\cite{Boyd1,Boyd2}.
%
%
A symmetric matrix $\mathcal{M}\in\R^{n\times n}$ is called {\em positive semidefinite}
and is denoted as $\mathcal{M}\succeq0$, if all of its eigenvalues are nonnegative.
Rewrite
$$\begin{array}{ll}
f(\textbf{x})=\textbf{x}C\textbf{x}^T,\ \ g_i(\textbf{x})=\textbf{x}A_i\textbf{x}^T,\ i=1,\ldots,U,\\
h_{j,l}(\textbf{x})=\textbf{x}A_{j,l}\textbf{x}^T,\ \
t_{j,l}=\textbf{y}_jB_{j,l}\textbf{y}_j^T,\ \ j=1,\ldots,V;\ i=1,\ldots,V_j.
\end{array}$$
where $C$, $A_{j,i}$ are $u\times u$ real symmetric matrices, $B_{j,i}$ is $w_{j}\times w_{j}$ real symmetric matrix.
Then, problem \eqref{eq-A4} is equivalent to the following SDP problem:
\begin{equation}
\label{eq-sdp3}
\begin{array}{ll}
\textrm{min}_{p_i,q_{j,l}\in\R}\ \ &0\\[0.2cm]
\textrm{subject\ to\ }\ &C-\sum_{i=1}^{U}p_iA_i
  -\sum_{j=1}^{V}\sum_{l=1}^{V_j}q_{j,l}A_{j,l} \succeq0,\\[0.2cm]
&\sum_{l=1}^{V_j}q_{j,l} B_{j,l}\succeq0,j=1,\ldots,V
\end{array}\end{equation}
The dual of problem (\ref{eq-sdp3}) is
\begin{equation}\begin{array}{ll}
\label{eq-sdp4}
\textrm{min}_{X\in\R^{u\times u},X_{j,l}\in\R^{n_{j,l}\times n_{j,l}}}\ \ &\langle X,C\rangle\\
\textrm{subject\ to\ }&\langle X,A_i\rangle=0,\ i=1,2,\ldots,U\\
&\langle X,A_{j,l}\rangle-\langle X_{j,l},B_{j,l}\rangle=0,\ j=1,\ldots,V, s=1,\ldots,V_j.
\end{array}
\end{equation}
where $X_1$ and $X_{j,l}$ are symmetric matrices and $\langle\cdot\rangle$ is the inner product by treating matrices as vectors.

Problem (\ref{eq-sdp4}) can be solved with the following Matlab program
which computes $P=(p_1,\ldots,p_{U},
 q_{1,1},\ldots$, $q_{V,V_V})$ with $C$, $A_{j,l}$ and $B_{j,l}$ as the input.
This program uses the CVX package in Matlab~\cite{grant2008} to solve SDPs.
\lstset{language=Matlab}
\begin{lstlisting}
cvx_begin
    variable X(u,u) Xjs(w_j,w_j),  j=1,...,V, s=1,...,V_j symmetric
    dual variable P
    minimize(trace(C*X))
    subject to
    [trace(A_1*X),trace(A_i*X), i=1,...,U,
     trace(A_js*X-B_js*Xjs), j=1,...,V, s=1,...,V_j,
     zeros(r,1):P;
    X == semidefinite(n);
    X_{js} == semidefinite(n_js); j=1,...,V, s=1,...,V_j
cvx_end
\end{lstlisting}

\end{document}